\newcommand{\defeq}{\stackrel{\triangle}{=}}
\long\def\comment#1{}
\newfont{\bbb}{msbm10 scaled 700}
\newfont{\bb}{msbm10 scaled 1100}
\newcommand{\cv}{{\bf c}}
\newcommand{\ev}{{\bf e}}
\newcommand{\fv}{{\bf f}}
\newcommand{\nv}{{\bf n}}
\newcommand{\tv}{{\bf t}}
\newcommand{\uv}{{\bf u}}
\newcommand{\vv}{{\bf v}}
\newcommand{\xv}{{\bf x}}
\newcommand{\yv}{{\bf y}}
\newcommand{\zerov}{{\bf 0}}
\newcommand{\onev}{{\bf 1}}
\newcommand{\Am}{{\bf A}}
\newcommand{\Bm}{{\bf B}}
\newcommand{\Dm}{{\bf D}}
\newcommand{\Em}{{\bf E}}
\newcommand{\Id}{{\bf I}}
\newcommand{\Lm}{{\bf L}}
\newcommand{\Pm}{{\bf P}}
\newcommand{\Qm}{{\bf Q}}
\newcommand{\Sm}{{\bf S}}
\newcommand{\Tm}{{\bf T}}
\newcommand{\Um}{{\bf U}}
\newcommand{\Wm}{{\bf W}}
\newcommand{\Ac}{{\cal A}}
\newcommand{\Ec}{{\cal E}}
\newcommand{\Hc}{{\cal H}}
\newcommand{\Ic}{{\cal I}}
\newcommand{\Nc}{{\cal N}}
\newcommand{\Rc}{{\cal R}}
\newcommand{\Sc}{{\cal S}}
\newcommand{\Vc}{{\cal V}}
\newcommand{\piv}{\hbox{\boldmath$\pi$}}
\newcommand{\Pim}{\hbox{\boldmath$\Pi$}}
\newcommand{\diag}{{\hbox{diag}}}
\newcommand{\trace}{{\hbox{tr}}}
\newtheorem{theorem}{Theorem}
\newtheorem{proposition}{Proposition}
\newtheorem{definition}{Definition}
\newtheorem{corollary}{Corollary}
\DeclareMathOperator*{\argmin}{arg\,min}
\DeclareMathOperator*{\argmax}{arg\,max}
\newcommand{\Scc}{{{\cal S}^c}}
\begin{document}

\title{Efficient Sampling Set Selection for Bandlimited Graph Signals Using Graph Spectral Proxies}

\author{Aamir~Anis,~\IEEEmembership{Student Member,~IEEE,}
        Akshay~Gadde,~\IEEEmembership{Student Member,~IEEE,}
        and~Antonio~Ortega,~\IEEEmembership{Fellow,~IEEE}
\thanks{A. Anis, A. Gadde and A. Ortega are with the Department
of Electrical Engineering, University of Southern California, Los Angeles,
CA, 90089 USA e-mail: (\{aanis, agadde\}@usc.edu, ortega@sipi.usc.edu).}
\thanks{This work is supported in part by NSF under grants CCF-1018977, CCF-1410009 and
CCF-1527874.}}


\maketitle

\begin{abstract}
We study the problem of selecting the best sampling set for bandlimited reconstruction of signals on graphs. 
A frequency domain representation for graph signals can be defined using the eigenvectors and eigenvalues of variation operators that 
take into account the underlying graph connectivity. 
Smoothly varying signals defined on the nodes are of particular interest in various applications, and tend to be approximately bandlimited in the frequency basis.
Sampling theory for graph signals deals with the problem of choosing the best subset of nodes for reconstructing a bandlimited signal from its samples. 
Most approaches to this problem require a computation of the frequency basis (i.e., the eigenvectors of the variation operator), followed by a search procedure using the basis elements. This can be impractical, in terms of storage and time complexity, for real datasets involving very large graphs.
We circumvent this issue in our formulation by introducing quantities called \emph{graph spectral proxies}, defined using the powers of the variation operator, in order to approximate the spectral content of graph signals. 
This allows us to formulate a direct sampling set selection approach that does not require the computation and storage of the basis elements.
We show that our approach also provides stable reconstruction when the samples are noisy or when the original signal is only approximately bandlimited. 
Furthermore, the proposed approach is valid for any choice of the variation operator, thereby covering a wide range of graphs and applications. We demonstrate its effectiveness through various numerical experiments.  
\end{abstract}


\IEEEpeerreviewmaketitle

\section{Introduction}
Graphs provide a natural representation for data in many applications, such as social networks, web information analysis, sensor networks and  machine learning~\cite{Shuman-SPM-13,Sandryhaila-SPM-14}. They can also be used to represent conventional data, such as images and videos~\cite{Narang-SSP-12,Gadde-ICIP-13}. A graph signal is a function defined over the nodes of a graph. Graph signal processing aims to extend the well-developed tools for analysis of conventional signals to signals on graphs while exploiting the underlying connectivity information~\cite{Shuman-SPM-13,Sandryhaila-SPM-14}. In this paper, we extend the theory of sampling for graph signals by developing fast and efficient algorithms for sampling set selection.

Sampling theory is of immense importance in traditional signal processing, providing a link between analog and discrete time domains and also serving as a major component in many discrete time processing systems. Fundamentally, it deals with the problem of recovering a signal from a subset of its samples. 
It provides the conditions under which the signal has a \emph{unique} and \emph{stable} reconstruction from the given samples.
Conversely, it gives the minimum sampling density required in order to get a unique and stable reconstruction for a signal that meets the modeling assumptions. 
Typically, the signal model is characterized by 
bandlimitedness in the Fourier domain. 
For example, the classical Nyquist-Shannon sampling theorem states that a signal in $L^2(\mathbb{R})$ with bandwidth $f$ can be uniquely reconstructed by its (uniformly spaced) samples if the sampling rate is higher than $2f$. Analogous results have been obtained for both regular and irregular sampling of discrete signals bandlimited in the DFT domain~\cite{Groechenig-LAA-93}.

Sampling theory of graph signals similarly deals with the problem of recovering a signal from its samples on a subset of nodes of the graph. The smoothness assumption on a graph signal is formalized in terms of bandlimitedness in a \emph{graph Fourier basis}. The graph Fourier basis is given by the eigenvectors and eigenvalues of certain \emph{variation operators} (e.g., graph Laplacian) that measure the variation in a graph signal while taking into account the underlying connectivity. To formulate a sampling theory for graph signals we need to consider the following questions: 1.~Given a subset of nodes to be sampled, what is the maximum bandwidth (in the appropriate graph Fourier domain) that a signal can have so that it can be uniquely and stably reconstructed from those samples? 2.~Given the signal bandwidth, what is the best subset of nodes to be sampled for a unique and stable reconstruction? 
Stability is an important issue in the choice of sampling set. 
In practice, signals are only approximately bandlimited and/or samples are noisy. A poor choice of sampling set can result in a very ill-conditioned reconstruction operator which amplifies the sample perturbations caused by noise and model mismatch and thus, lead to large reconstruction errors. 
Hence, selecting a sampling set that gives stable reconstructions is vital.

The problem of selecting sampling sets for recovery of smooth, bandlimited graph signals, arises in many applications. A prominent example is active semi-supervised learning~\cite{Gadde-KDD-14}, where a learner is allowed to specify a set of points to be labeled, given a budget, before predicting the unknown labels. In this setting, class indicator vectors can be considered as smooth or bandlimited graph signals and the set of points to be labeled as the sampling set. Therefore the task of actively choosing the training set in this scenario is equivalent to finding the best possible sampling set, under a given sampling budget.
Other applications of sampling set selection include selective activation of sensors in sensor networks, and design of graph based lifting transforms in image compression~\cite{Chao-PCS-15}. Signals of interest in these applications are also smooth with respect to the graph and the goal is to find the best sampling locations that  minimize the reconstruction error. 

Most recent approaches for formulating a sampling theory for graph signals involve two steps -- first, computing a portion of the graph Fourier basis, and second, using the basis elements either to check if a unique and stable reconstruction is possible with the given samples or to search for the best subset for sampling.
However, when the graphs of interest are large, computing and storing multiple eigenvectors of their variation operators increases the numerical complexity and memory requirement significantly.
Therefore, we propose a technique that achieves comparable results by using the variation operator directly and skipping the intermediate step of eigen-decomposition.


\subsection{Related work}

Sampling theory for graph signals was first studied in~\cite{Pesenson-AMS-08}, where a sufficient condition for unique recovery of signals is stated for a given sampling set. Using this condition,~\cite{Narang-ICASSP-13} gives a bound on the maximum bandwidth that a signal can have, so that it can be uniquely reconstructed from its samples on a given subset of nodes. We refine this bound in our previous work~\cite{Anis-ICASSP-14} by considering a necessary and sufficient condition for sampling. Using this condition, we also propose a direct sampling set selection method that finds a set approximately maximizing this bound so that a larger space of graph signals can be uniquely reconstructed. However, \cite{Anis-ICASSP-14} does not explain why maximizing this bound leads to stable reconstructions. Moreover, the results are specific to undirected graphs. Thus, the main contributions of this paper are to extend our prior work and propose an efficient sampling set selection algorithm that generalizes easily for different graphs, while also considering the issue of stability. 

Previous methods for sampling set selection in graphs can be classified into two types, namely spectral-domain methods and vertex-domain methods, which are summarized below. 

\subsubsection{Spectral-domain approaches}

Most of the recent work on sampling theory of graph signals assumes that a portion of the graph Fourier basis is explicitly known. We classify these methods as spectral-domain approaches since they involve computing the spectrum of the variation operator. 
For example, the work of~\cite{Shomorony-GSIP-14} requires computation and processing of the first $r$ eigenvectors of the graph Laplacian to construct a sampling set that guarantees unique (but not necessarily stable) reconstruction for a signal spanned by those eigenvectors. 
Similarly, a greedy algorithm for selecting stable sampling sets for a given bandlimited space is proposed in \cite{Chen-arxiv-15}. It considers a spectral-domain criterion, using minimum singular values of submatrices of the graph Fourier transform matrix, to minimize the effect of sample noise in the worst case. The work of~\cite{tsitsvero-arxiv-15} creates a link between the uncertainty principle for graph signals and sampling theory to arrive at similar criteria in the presence of sample noise.
It is also possible to generalize this approach using ideas from the theory of optimal experiment design~\cite{Joshi-TSP-09} and  
define other spectral-domain optimality criteria for selecting sampling sets that minimize different measures of reconstruction error when the samples are noisy (for example, the mean squared error). 
Greedy algorithms can then be used to find sets which are approximately optimal with respect to these criteria.


\subsubsection{Vertex-domain approaches}

There exist alternative approaches to sampling set selection that do not consider graph spectral information and instead rely on vertex-domain characteristics. Examples include \cite{Narang-ICIP-10} and \cite{Nguyen-TSP-15}, which select sampling sets based on maximum graph cuts and spanning trees, respectively. However, these methods are better suited for designing downsampling operators required in bipartite graph multiresolution transforms~\cite{Narang-TSP-12, Narang-TSP-13}. Specifically, they do not consider the issue of optimality of sampling sets in terms of quality of bandlimited reconstruction. 
Further, it can be shown that the maximum graph-cut based sampling set selection criterion is closely related to a special case of our proposed approach. There exists an alternate vertex-domain sampling approach, described in the work of~\cite{marques-tsp-15}, that involves successively shifting a signal using the adjacency matrix and aggregating the values of these signals on a given node. However, sampling using this strategy requires aggregating the sample values for a neighborhood size equal to the dimension of the bandlimited space, which can cover a large portion of the graph.

The sampling strategies described so far involve deterministic methods of approximating optimal sampling sets. There also exists a randomized sampling strategy~\cite{puy-arxiv-15} that guarantees a bound on the worst case reconstruction error in the presence of noise by sampling nodes independently based on a judiciously designed distribution over the nodes. 
However, one needs to sample much more nodes than the dimension of the bandlimited space to achieve the error bound.



\subsection{Contributions of this work}

It is possible to extract and process useful spectral information about a graph signal even when the graph Fourier basis is not known. For example, spectral filters in the form of polynomials of the variation operator are used in the design of wavelet filterbanks for graph signals~\cite{hammond11, Narang-TSP-12, Narang-TSP-13} to offer a trade-off between frequency-domain and vertex-domain localization. 
In our work, we use a similar technique of extracting spectral information from signals using $k$-hop localized operations, without explicitly computing the graph Fourier basis elements.
Our main contributions can be summarized as follows:
\begin{enumerate}[leftmargin=13pt]
\item Motivated by spectral filters localized in the vertex domain, we define \emph{graph spectral proxies} based on powers of the variation operator to approximate the bandwidth of graph signals. These proxies can be computed using localized operations in a distributed fashion with minimal storage cost, thus forming the key ingredient of our approach. These proxies have a tunable parameter $k$ (equal to the number of hops), that provides a trade-off between accuracy of the approximation versus the cost of computation.
\item Using these proxies, we give an approximate bound on the maximum bandwidth of graph signals (cutoff frequency) that guarantees unique reconstruction with the given samples. We show that this bound also gives us a measure of reconstruction stability for a given sampling set.
\item We finally introduce a greedy, iterative gradient based algorithm that aims to maximize the bound, in order to select an approximately optimal sampling set of given size. 

\end{enumerate}
A specific case of these spectral proxies based on the undirected graph Laplacian, and a sampling set selection algorithm, has been introduced earlier in our previous work~\cite{Anis-ICASSP-14}.
With respect to~\cite{Anis-ICASSP-14} the key new contributions are  as follows. 
We generalize the framework to a variety of  variation operators, thereby making it applicable for both undirected and directed graphs.
We provide a novel interpretation for the cutoff frequency function in~\cite{Anis-ICASSP-14} as a stability measure for a given sampling set that one can maximize. 
We show that the spectral proxies arise naturally in the expression for the bound on the reconstruction error when the samples are noisy or the signals are only approximately bandlimited. Thus, an optimal sampling set in our formulation minimizes this error bound.
We also show that our algorithm is equivalent to performing Gaussian elimination on the graph Fourier transform matrix in a certain limiting sense, and is therefore closely related to spectral-domain approaches. 
Numerical complexity of the proposed algorithm is evaluated and compared to other  state of the art methods that were introduced after \cite{Anis-ICASSP-14} was published.
Finally, we evaluate the performance and complexity of the proposed algorithm through extensive experiments using different graphs and signal models.

The rest of the paper is organized as follows. Section~\ref{sec:frequency} defines the notation used in the paper. This is followed by the concepts of frequency and bandlimitedness for signals, on both undirected and directed graphs, based on different variation operators.
In Section~\ref{sec:known_spectrum}, we consider the problems of  bandlimited reconstruction, uniqueness and stable sampling set selection, assuming that the graph Fourier basis is known. Section~\ref{sec:unknown_spectrum} addresses these problems using graph spectral proxies.
The effectiveness of our approach is demonstrated in Section~\ref{sec:experiments} through numerical experiments. We conclude in Section~\ref{sec:conclusion} with some directions for future work.


\section{Background}
\label{sec:frequency}
\subsection{Notation}

A graph $G = (\mathcal{V},\Ec)$ is a collection of nodes indexed by the set $\mathcal{V}=\{1,\ldots,N\}$ and connected by links $\Ec = \{(i,j,w_{ij})\}$, where $(i,j,w_{ij})$ denotes a link of weight $w_{ij} \in \mathbb{R}^+$ pointing from node $i$ to node $j$. 
The adjacency matrix $\mathbf{W}$ of the graph is an $N \times N$ matrix with $\Wm(i,j) = w_{ij}$. 
A graph signal is a function $f:\mathcal{V}\rightarrow \mathbb{R}$ defined on the vertices of the graph, (i.e., a scalar value assigned to each vertex). It can be represented as a vector $\mathbf{f} \in \mathbb{R}^N$ where $\fv_i$ represents the function value on the $i^\text{th}$ vertex.
For any $\xv \in \mathbb{R}^N$ and a set $\Sc \subseteq \{1,\dots,N\}$, we use $\xv_\Sc$ to denote a sub-vector of $\xv$ consisting of components indexed by $\Sc$. Similarly, for $\Am \in \mathbb{R}^{N\times N}$, $\Am_{\Sc_1\Sc_2}$ is used to denote the sub-matrix of $\Am$ with rows indexed by $\Sc_1$ and columns indexed by $\Sc_2$. For simplicity, we denote $\Am_{\Sc\Sc}$ by $\Am_\Sc$. 
The complement of $\Sc$ in $\Vc$ is denoted by $\Sc^c = \Vc \smallsetminus \Sc$. 
Further, we define $L_2(\Sc)$ to be the space of all graph signals which are zero everywhere except possibly on the subset of nodes $\Sc$, i.e.,
\begin{equation}
L_2(\Sc) = \{\xv \in \mathbb{R}^N \mid \xv_{\Sc^c} = \zerov\}.
\end{equation}


\subsection{Notions of Frequency for Graph Signals}
In order to formulate a sampling theory for graph signals, we need a notion of frequency that enables us to characterize the level of smoothness of the graph signal with respect to the graph. The key idea, which is used in practice, is to define analogs of operators such as shift or variation from traditional signal processing, that allow one to transform a signal or measure its properties while taking into account the underlying connectivity over the graph. Let $\Lm$ be such an operator in the form of an $N \times N$ matrix\footnote{Although $\Lm$ has been extensively used to denote the combinatorial Laplacian in graph theory, we overload this notation to make the point that any such variation operator can be defined to characterize signals of interest in the application at hand.}. 
A variation operator creates a notion of smoothness for graph signals through its spectrum.
Specifically, assume that $\Lm$ has eigenvalues $|\lambda_1| \leq \ldots \leq |\lambda_N|$ and corresponding eigenvectors $\{\uv^1, \ldots, \uv^N\}$.  
Then, these eigenvectors provide a Fourier-like basis for graph signals with the frequencies given by the corresponding eigenvalues.
For each $\Lm$, one can also define a variation functional $\text{Var}(\Lm,\fv)$ that measures the variation in any signal $\fv$ with respect to $\Lm$. 
Such a definition should induce an ordering of the eigenvectors which is consistent with the ordering of eigenvalues. More formally, if $|\lambda_i| \leq |\lambda_j|$, then $\text{Var}(\Lm,\uv^i) \leq \text{Var}(\Lm,\uv^j)$.

The graph Fourier transform (GFT) $\tilde{\fv}$ of a signal $\fv$ is given by its representation in the above basis, $\tilde{\fv} = \Um^{-1}\fv$, where $\Um = [\uv^1 \ldots \uv^N]$. Note that a GFT can be defined using any variation operator. Examples of possible variation operators are reviewed in Section~\ref{sec:var_ops}. 
If the variation operator $\Lm$ is symmetric then its eigenvectors are orthogonal leading to an orthogonal GFT. In some cases, $\Lm$ may not be diagonalizable. In such cases, one can resort to the Jordan normal form~\cite{Moura-TSP-14} and use generalized eigenvectors. 

A signal $\fv$ is  said to be $\omega$-bandlimited if $\tilde{\fv}_i = 0$ for all $i$ with $|\lambda_i| > \omega$. In other words, GFT of an $\omega$-bandlimited\footnote{One can also define highpass and bandpass signals in the GFT domain. Sampling theory can be generalized for such signals by treating them as lowpass in the eigenbasis of a shifted variation operator, \emph{e.g.}, one can use $\Lm' = |\lambda_N| \Id - \Lm$ for highpass signals.} $\fv$ is supported on frequencies in $[0,\omega]$. If $\{\lambda_1, \lambda_2, \ldots,\lambda_r\}$ are the eigenvalues of $\Lm$ less than or equal to $\omega$ in magnitude,
then any $\omega$-bandlimited signal can be written as a linear combination of the corresponding eigenvectors:
\begin{equation}
\fv = \sum_{i=1}^r \tilde{\fv}_i \uv^i = \Um_{\Vc\Rc} \tilde{\fv}_\Rc,
\label{eq:bl_sig}
\end{equation}  
where $\Rc = \{1,\ldots,r\}$. The space of $\omega$-bandlimited signals is called Paley-Wiener space and is denoted by $PW_\omega(G)$~\cite{Pesenson-AMS-08}. Note that $PW_\omega(G) = \text{range}(\Um_{\Vc\Rc})$ (\emph{i.e.}, the span of columns of $\Um_{\Vc\Rc}$).
Bandwidth of a signal $\fv$ is defined as the largest among absolute values of eigenvalues corresponding to non-zero GFT coefficients of $\fv$, i.e.,
\begin{equation}
\omega(\fv) \defeq \max_i \; \{|\lambda_i| \;|\; \tilde{\fv}_i \neq 0 \}.
\end{equation}

A key ingredient in our theory is an approximation of the bandwidth of a signal using powers of the variation operator $\Lm$, as explained in Section~\ref{sec:unknown_spectrum}. Since this approximation holds for any variation operator, the proposed theory remains valid for all of the choices of GFT in Table~\ref{tab:gft_bases}.


\subsection{Examples of variation operators} 
\label{sec:var_ops}

\subsubsection{Variation on undirected graphs}

In undirected graphs, the most commonly used variation operator is the combinatorial Laplacian~\cite{Chung-97} given by:
\begin{equation}
\Lm = \Dm - \Wm,
\end{equation}
where $\Dm$ is the diagonal degree matrix $\text{diag}\{d_1, \dots, d_N\}$ with $d_i = \sum_j w_{ij}$. Since, $w_{ij} = w_{ji}$ for undirected graphs, this matrix is symmetric. As a result, it has real non-negative eigenvalues $\lambda_i \geq 0$ and an orthogonal set of eigenvectors. The variation functional associated with this operator is known as the \emph{graph Laplacian quadratic form}~\cite{Shuman-SPM-13} and is given by
\begin{equation}
\text{Var}_\textit{QF}(\fv) = \fv^\top \Lm \fv = \frac{1}{2} \sum_{i,j} w_{ij} (f_i - f_j)^2.
\label{eq:var_qf}
\end{equation}
One can normalize the combinatorial Laplacian to obtain the symmetric normalized Laplacian and the (asymmetric) random walk Laplacian given as
\begin{equation}
\Lm_\textit{sym} = \Dm^{-1/2} \Lm \Dm^{-1/2}, \quad \Lm_\textit{rw} = \Dm^{-1} \Lm.
\end{equation}
Both $\Lm_\textit{sym}$ and $\Lm_\textit{rw}$ have non-negative eigenvalues. However the eigenvectors of $\Lm_\textit{rw}$ are not orthogonal as it is asymmetric. The eigenvectors of $\Lm_\textit{sym}$, on the other hand, are orthogonal. The variation functional associated with $\Lm_\textit{sym}$ has a nice interpretation as it normalizes the signal values on the nodes by the degree:
\begin{equation}
\text{Var}_\textit{QFsym}(\fv) = \fv^\top \Lm_\textit{sym} \fv = \frac{1}{2}\sum_{i,j} w_{ij} \left( \frac{f_i}{\sqrt{d_i}}-\frac{f_j}{\sqrt{d_j}} \right)^2.
\label{eq:var_qf_sym}
\end{equation}



\subsubsection[]{Variation on directed graphs}
Note that variation operators defined for directed graphs can also be used for undirected graphs since each undirected edge can be thought of as two oppositely pointing directed edges. 

\paragraph{Variation using the adjacency matrix}
This approach involves posing the adjacency matrix as a \emph{shift operator} over the graph (see \cite{Moura-TSP-14} for details). For any signal $\fv \in \mathbb{R}^n$, the signal $\Wm \fv$ is considered as a shifted version of $\fv$ over the graph, analogous to the shift operation defined in digital signal processing. Using this analogy, \cite{Moura-TSP-14} defines \emph{total variation} of a signal $\fv$ on the graph as
\begin{equation}
\label{eq:var_tv}
\text{Var}^p_{TV}(\fv) = \left\|\fv - \frac{1}{|\mu_{\text{max}}|} \Wm \fv\right\|_p,
\end{equation}
where $p = 1,2$ and $\mu_{\text{max}}$ denotes the eigenvalue of $\Wm$ with the largest magnitude. It can be shown that for two eigenvalues $|\mu_i| < |\mu_j|$ of $\Wm$, the corresponding eigenvectors $\vv_i$ and $\vv_j$ satisfy $\text{Var}^p_{TV}(\vv^i) < \text{Var}^p_{TV}(\vv^j)$. In order to be consistent with our convention, one can define the variation operator as $\Lm = \mathbf{I} - \Wm/|\mu_{\text{max}}|$ which has the same eigenvectors as $\Wm$ with eigenvalues $\lambda_i = 1-\mu_i/|\mu_{\text{max}}|$. This allows us to have the same ordering for the graph frequencies and the variations in the basis vectors. 
Note that for directed graphs, where $\Wm$ is not symmetric, the GFT basis vectors will not be orthogonal. Further, for some adjacency matrices, there may not exist a complete set of linearly independent eigenvectors. In such cases, one can use generalized eigenvectors in the Jordan normal form of $\Wm$ as stated before~\cite{Moura-TSP-14}. 

%

\begin{table*}[t]
\centering
\caption{Different choices of the variation operator $\Lm$ for defining GFT bases.}
\label{tab:gft_bases}
\begin{tabular}{p{0.6in}p{2.2in}p{0.5in}p{2in}p{1.00in}}
\hline\hline
Operator                 & Expression                                                                                                                                                                                                             & Graph type          & Associated variation functional                                                                                                      & Properties                                           \\
\hline
Combinatorial            & $\Lm = \Dm - \Wm$                                                                                                                                                    & Undirected          & $\fv^\top \Lm \fv = \frac{1}{2} \sum_{i,j} w_{ij} (\fv_i - \fv_j)^2$                                                                 & Symmetric, $\lambda_i \geq 0$, $\Um$ orthogonal      \\
\hline
Symmetric normalized     & $\Lm = \Id - \Dm^{-1/2} \Wm \Dm^{-1/2}$                                                                                                                                                           & Undirected          & $\fv^\top \Lm \fv =  \frac{1}{2}\sum_{i,j} w_{ij} \left( \frac{\fv_i}{\sqrt{d_i}}-\frac{\fv_j}{\sqrt{d_j}} \right)^2$                & Symm. , $\lambda_i \in [0,2]$, $\Um$ orthogonal  \\
\hline
Random walk (undirected) & $\Lm = \Id - \Dm^{-1} \Wm$                                                                                                                                                                       & Undirected          & $||\Lm \fv ||$                                                                                                                       & Asymmetric, $\lambda_i \geq 0$, $\Um$ non-orthogonal \\
\hline
Adjacency-based          & $\Lm = \Id - \frac{1}{|\mu_{\rm max}|} \Wm$, $\mu_{\rm max}$: maximum eigenvalue of $\Wm$                                                                                                                              & Undirected/ Directed & $|| \Lm \fv ||_p, p = 1,2$                                                                                                           & Asymm., non-orthog. $\Um$  for directed graphs, $\text{Re}\{\lambda_i\} \geq 0$  \\
\hline
Hub-authority            & $\Lm = \gamma (\Id - \Tm^\top \Tm) + (1-\gamma) (\Id - \Tm \Tm^\top)$, $\Tm = \Dm_p^{-1/2} \Wm \Dm_q^{-1/2}$, $\Dm_p = {\rm diag} \{ p_i \}, p_i = \sum_j w_{ji}$, $\Dm_q = {\rm diag} \{ q_i \}, q_i = \sum_j w_{ij}$ & Directed            & $\fv^\top \Lm\fv$, see text for complete expression.                                                            & Symmetric, $\lambda_i \geq 0$, $\Um$ orthogonal      \\
\hline
Random walk (directed)   & $\Lm = \Id - \frac{1}{2}\left(\Pim^{1/2}\Pm \Pim^{-1/2} + \Pim^{-1/2}\Pm^\top \Pim^{1/2}\right)$, $\Pm_{ij} = w_{ij}/\sum_j w_{ij}$, $\Pim = {\rm diag} \{ \pi_i \}$                                                   & Directed            & $\fv^\top \Lm \fv = \frac{1}{2}\sum_{i,j} \piv_i \Pm_{ij} \left(\frac{\fv_i}{\sqrt{\piv_i}} - \frac{\fv_j}{\sqrt{\piv_j}} \right)^2$ & Symmetric, $\lambda_i \geq 0$, $\Um$ orthogonal     \\
\hline
\end{tabular}
\end{table*}

\paragraph{Variation using the hub-authority model}

This notion of variation is based on the hub-authority model~\cite{Kleinberg-ACM-99} for specific directed graphs such as a hyperlinked environment (e.g., the web). This model distinguishes between two types of nodes. Hub nodes $\Hc$ are the subset of nodes which point to other nodes, whereas authority nodes $\Ac$ are the nodes to which other nodes point. Note that a node can be both a hub and an authority simultaneously. In a directed network, we need to define two kinds of degrees for each node $i \in \Vc$, namely the in-degree $p_i = \sum_j w_{ji}$ and the out-degree $q_i = \sum_j w_{ij}$. The co-linkage between two authorities $i,j \in \Ac$ or two hubs $i,j \in \Hc$ is defined as
\begin{equation}
c_{ij} = \sum_{h\in \Hc} \frac{w_{hi}w_{hj}}{q_h} \quad \text{and} \quad c_{ij} = \sum_{a\in \Ac} \frac{w_{ia}w_{ja}}{p_a}
\end{equation}
respectively, and can be thought of as a cumulative link weight between two authorities (or hubs). Based on this, one can define a variation functional for a signal $\fv$ on the authority nodes~\cite{Zhou-NIPS-04} as
\begin{equation}
\text{Var}_\Ac(\fv) = \frac{1}{2}\sum_{i,j \in \Ac} c_{ij} \left(\frac{f_i}{\sqrt{p_i}}-\frac{f_j}{\sqrt{p_j}}\right)^2.
\label{eq:auth_var}
\end{equation}
In order to write the above functional in a matrix form, define $\Tm = \Dm_q^{-1/2}\Wm\Dm_p^{-1/2}$, where $\Dm_p^{-1/2}$ and $\Dm_q^{-1/2}$ are diagonal matrices with 
\[
{(\Dm_p^{-1/2})}_{ii} =
\begin{cases} 
\frac{1}{\sqrt{p_i}} \text{ if } p_i \neq 0 \\
0 \; \text{ otherwise},
\end{cases}
{(\Dm_q^{-1/2})}_{ii} =
\begin{cases} 
\frac{1}{\sqrt{q_i}} \text{ if } q_i \neq 0 \\
0 \; \text{ otherwise}.
\end{cases}
\]
It is possible to show that $\text{Var}_\Ac(\fv) = \fv^\top \Lm_\Ac \fv$, where $\Lm_\Ac = \mathbf{I} - \Tm^\top \Tm$. A variation functional for a signal $\fv$ on the hub nodes can be defined in the same way as \eqref{eq:auth_var} and can be written in a matrix form as $\text{Var}_\Hc(\fv) = \fv^\top \Lm_\Hc \fv$, where $\Lm_\Hc = \mathbf{I}-\Tm \Tm^\top$. A convex combination $\text{Var}_\gamma(\fv) = \gamma \text{Var}_\Ac(\fv) + (1-\gamma) \text{Var}_\Hc(\fv)$, with $\gamma \in [0,1]$,  can be used to define a variation functional for $\fv$ on the whole vertex set $\Vc$. Note that the corresponding variation operator $\Lm_\gamma = \gamma\Lm_\Ac + (1-\gamma)\Lm_\Hc$ is symmetric and positive semi-definite. Hence, eigenvectors and eigenvalues of $\Lm_\gamma$ can be used to define an orthogonal GFT similar to the undirected case, where the variation in the eigenvector increases as the corresponding eigenvalue increases.

\paragraph{Variation using the random walk model}
Every directed graph has an associated random walk with a probability transition matrix $\Pm$ given by 
\begin{equation}
\Pm_{ij} = \frac{w_{ij}}{\sum_{j}w_{ij}}.
\end{equation}
By the Perron-Frobenius theorem, if $\Pm$ is irreducible then it has a stationary distribution $\piv$ which satisfies $\piv \Pm = \piv$~\cite{Horn-Johnson}. One can then define the following variation functional for signals on directed graphs~\cite{Chung-05, Zhou-ICML-05}:
\begin{equation}
\text{Var}_{rw}(\fv) = \frac{1}{2}\sum_{i,j} \piv_i \Pm_{ij} \left(\frac{f_i}{\sqrt{\piv_i}} - \frac{f_j}{\sqrt{\piv_j}} \right)^2.
\end{equation} 
Note that if the graph is undirected, the above expression reduces to \eqref{eq:var_qf_sym} since, in that case, $\piv_i = d_i/\sum_{j}d_j$. Intuitively, $\piv_i \Pm_{ij}$ can be thought of as the probability of transition from node $i$ to $j$ in the steady state. We expect it to be large if $i$ is similar to $j$. Thus, a big difference in signal values on nodes similar to each other contributes more to the variation. A justification for the above functional in terms of generalization of normalized cut to directed graphs is given in~\cite{Chung-05, Zhou-ICML-05}. Let $\Pim = \diag\{\piv_1,\ldots, \piv_n\}$. Then $\text{Var}_\text{rw}(\fv)$ can be written as $\fv^\top\Lm \fv$, where
\begin{equation}
\Lm = \mathbf{I} - \frac{1}{2}\left(\Pim^{1/2}\Pm \Pim^{-1/2} + \Pim^{-1/2}\Pm^\top \Pim^{1/2}\right).
\end{equation}
It is easy to see that the above $\Lm$ is a symmetric positive semi-definite matrix. Therefore, its eigenvectors can be used to define an orthonormal GFT, where the variation in the eigenvector increases as the corresponding eigenvalue increases.

Table~\ref{tab:gft_bases} summarizes different choices of GFT bases based on the above variation operators. Our theory applies to all of these choices of GFT (with the caveat that diagonalizability is assumed in the definition of adjacency-based GFT).


\section{Sampling theory for graph signals}
\label{sec:known_spectrum}
In this section, we address the issue of uniqueness and stability of bandlimited graph signal reconstruction and discuss different optimality criteria for sampling set selection assuming that the graph Fourier basis (i.e., the spectrum of the corresponding variation operator) is known.
The uniqueness conditions in this section are equivalent to the ones in ~\cite{Shomorony-GSIP-14, Chen-arxiv-15, Eldar-JFA-03}. However, the specific form in which we present these conditions lets us give a GFT-free definition of cutoff frequency. This together with the spectral proxies defined later in Section~\ref{sec:unknown_spectrum} allows us to circumvent the explicit computation of the graph Fourier basis to ensure uniqueness and find a good sampling set. 

The results in this section are useful when the graphs under consideration are small and thus, computing the spectrum of their variation operators is computationally feasible. They also serve as a guideline for tackling the aforementioned questions when the graphs are large and computation and storage of the graph Fourier basis is impractical.

\subsection{Uniqueness of Reconstruction}
In order to give a necessary and sufficient condition for unique identifiability of any signal $\fv \in PW_\omega(G)$ from its samples $\fv_\Sc$ on the sampling set $\Sc$, we first state the concept of uniqueness set~\cite{Pesenson-AMS-08}.  
\begin{definition}[Uniqueness set] A subset of nodes $\Sc$ is a uniqueness set for the space $PW_\omega(G)$ iff $\xv_\Sc = \yv_\Sc$ implies $\xv = \yv$ for all $\xv,\yv \in PW_\omega(G)$.
\end{definition}
Unique identifiability requires that no two bandlimited signals have the same samples on the sampling set as ensured by the following theorem in our previous work~\cite{Anis-ICASSP-14}.
\begin{theorem}[Unique sampling]
\label{thm:uniqueness}
$\Sc$ is a uniqueness set for $PW_\omega(G)$ if and only if $PW_\omega(G) \cap L_2(\Sc^c) = \{ \zerov \}$.
\end{theorem}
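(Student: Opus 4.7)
The plan is to observe that both $PW_\omega(G)$ and $L_2(\Sc^c)$ are linear subspaces of $\mathbb{R}^N$ — the former is $\text{range}(\Um_{\Vc\Rc})$ by \eqref{eq:bl_sig} and the latter is the coordinate subspace of signals vanishing on $\Sc$ — and then to recast the uniqueness condition as a kernel condition on the sampling (restriction-to-$\Sc$) operator. Concretely, define $\Phi_\Sc: PW_\omega(G) \to \mathbb{R}^{|\Sc|}$ by $\Phi_\Sc(\xv) = \xv_\Sc$. Then $\Sc$ is a uniqueness set exactly when $\Phi_\Sc$ is injective, and injectivity of a linear map is equivalent to having trivial kernel. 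The kernel of $\Phi_\Sc$ is precisely $PW_\omega(G) \cap L_2(\Sc^c)$, which is the intersection appearing in the theorem.

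For the forward direction I would argue as follows. Suppose $\Sc$ is a uniqueness set, and take any $\zv \in PW_\omega(G) \cap L_2(\Sc^c)$. Then $\zv_\Sc = \zerov = \zerov_\Sc$, and $\zv, \zerov \in PW_\omega(G)$. The uniqueness property applied to the pair $(\zv, \zerov)$ forces $\zv = \zerov$, so the intersection is trivial.

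For the reverse direction, suppose $PW_\omega(G) \cap L_2(\Sc^c) = \{\zerov\}$, and take any $\xv, \yv \in PW_\omega(G)$ with $\xv_\Sc = \yv_\Sc$. Set $\zv = \xv - \yv$. Since $PW_\omega(G)$ is a linear subspace, $\zv \in PW_\omega(G)$; since $\zv_\Sc = \zerov$, we have $\zv \in L_2(\Sc^c)$. The hypothesis then gives $\zv = \zerov$, i.e., $\xv = \yv$, establishing that $\Sc$ is a uniqueness set.

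There is no real obstacle here — the argument is essentially the standard fact that a linear map is injective iff its kernel is trivial. The only subtlety worth flagging is that the claim relies on $PW_\omega(G)$ being closed under linear combinations, which is immediate from its definition as the span of eigenvectors (or generalized eigenvectors, in the non-diagonalizable case discussed around the Jordan form in Section~\ref{sec:frequency}). Because the argument never invokes orthogonality of the GFT basis, the theorem applies uniformly to every variation operator listed in Table~\ref{tab:gft_bases}.
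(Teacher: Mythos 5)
Your proof is correct and is exactly the standard linear-algebra argument (injectivity of the restriction map $\Phi_\Sc$ on the subspace $PW_\omega(G)$ is equivalent to a trivial kernel, and that kernel is $PW_\omega(G)\cap L_2(\Sc^c)$); the paper itself omits the proof, deferring to its earlier reference, but this is clearly the intended reasoning. Both directions are handled cleanly, and your remark that only linearity of $PW_\omega(G)$ is needed (not orthogonality of the GFT basis) is accurate and consistent with the paper's claim that the result holds for all variation operators in Table~\ref{tab:gft_bases}.
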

%
%
%
Let $\Sm$ be a matrix whose columns are indicator functions for nodes in $\Sc$. Note that $\Sm^\top:\mathbb{R}^n \rightarrow \mathbb{R}^{|\Sc|}$ is the sampling operator with $\Sm^\top \fv = \fv_\Sc$. Theorem~\ref{thm:uniqueness} essentially states that no signal in $PW_\omega(G)$ is in the null space $\Nc(\Sm^\top)$ of the sampling operator. 
Any $\fv \in PW_\omega(G)$ can be written as $\fv = \Um_{\Vc\Rc}\cv$.
Thus, for unique sampling of any signal in $PW_\omega(G)$ on $\Sc$, we need $\Sm^\top \Um_{\Vc\Rc}\cv = \Um_{\Sc\Rc} \cv \neq \zerov \;\; \forall \; \cv \neq \zerov$. This observation leads to the following corollary (which is also given in~\cite{Chen-ICASSP-15}).
\begin{corollary}
\label{thm:uniqueness_U}
Let $\Rc = \{1, \dots, r \}$, where $\lambda_r$ is the largest graph frequency less than $\omega$. Then 
$\Sc$ is a uniqueness set for $PW_\omega(G)$ if and only if $\Um_{\Sc\Rc}$ has full column rank.
\end{corollary}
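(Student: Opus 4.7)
The plan is to deduce the corollary directly from Theorem~\ref{thm:uniqueness} by re-expressing the intersection condition $PW_\omega(G) \cap L_2(\Sc^c) = \{\zerov\}$ in terms of the coefficient representation of bandlimited signals given in \eqref{eq:bl_sig}.

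First I would parametrize the Paley-Wiener space. Because $\lambda_r$ is the largest eigenvalue with $|\lambda_r| \leq \omega$ and $\Rc = \{1,\ldots,r\}$, every signal $\fv \in PW_\omega(G)$ admits the expansion $\fv = \Um_{\Vc\Rc}\,\cv$ for some $\cv \in \mathbb{R}^r$, and conversely every such linear combination lies in $PW_\omega(G)$. A key preliminary observation is that the columns of $\Um_{\Vc\Rc}$ are linearly independent: they are either orthogonal eigenvectors (in the symmetric case) or, more generally, columns of the (possibly generalized) eigenvector/Jordan basis, which is always a basis of $\mathbb{R}^N$. Consequently $\Um_{\Vc\Rc}$ has full column rank, so the map $\cv \mapsto \Um_{\Vc\Rc}\cv$ is injective.

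Next I would translate the sampling condition. Since $\Sm^\top$ restricts a signal to its entries on $\Sc$, we have $\fv_\Sc = \Sm^\top \Um_{\Vc\Rc}\cv = \Um_{\Sc\Rc}\cv$. A bandlimited signal $\fv$ lies in $L_2(\Sc^c)$ precisely when $\fv_\Sc = \zerov$, i.e.\ when $\Um_{\Sc\Rc}\cv = \zerov$. Combining this with the injectivity from the previous paragraph, the intersection $PW_\omega(G) \cap L_2(\Sc^c)$ is nontrivial if and only if there exists a nonzero $\cv$ with $\Um_{\Sc\Rc}\cv = \zerov$, i.e.\ if and only if $\Um_{\Sc\Rc}$ fails to have full column rank.

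Finally I would close the argument: by Theorem~\ref{thm:uniqueness}, $\Sc$ is a uniqueness set iff $PW_\omega(G) \cap L_2(\Sc^c) = \{\zerov\}$, which by the above equivalence holds iff $\Um_{\Sc\Rc}$ has full column rank. The only step requiring any real care is the injectivity of $\cv \mapsto \Um_{\Vc\Rc}\cv$, which is straightforward for symmetric variation operators but warrants an explicit appeal to the Jordan-basis construction mentioned in Section~\ref{sec:frequency} when $\Lm$ is non-diagonalizable; no other step involves nontrivial computation.
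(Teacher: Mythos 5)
Your proof is correct and follows essentially the same route as the paper: parametrize $PW_\omega(G)$ as the range of $\Um_{\Vc\Rc}$ and observe that the intersection condition of Theorem~\ref{thm:uniqueness} becomes the statement that $\Um_{\Sc\Rc}\cv \neq \zerov$ for all $\cv \neq \zerov$, i.e.\ full column rank. Your explicit remark that $\cv \mapsto \Um_{\Vc\Rc}\cv$ is injective (because the eigenvectors, or generalized eigenvectors in the Jordan basis, are linearly independent) is a small point the paper leaves implicit, and it is correctly handled.
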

If $\Um_{\Sc\Rc}$ has a full column rank, then a unique reconstruction $\hat{\fv} \in PW_\omega(G)$ can be obtained by finding the unique least squares solution to $\fv_\Sc = \Um_{\Sc\Rc}\cv$:
\begin{equation}
\hat{\fv} = \Um_{\Vc\Rc} \Um_{\Sc\Rc}^+ \fv_\Sc,
\label{eq:bl_recon}
\end{equation}
where $ \Um_{\Sc\Rc}^+ = (\Um_{\Sc\Rc}^\top\Um_{\Sc\Rc})^{-1}\Um_{\Sc\Rc}^\top$ is the pseudo-inverse of $\Um_{\Sc\Rc}$. The above reconstruction formula is also known as consistent reconstruction~\cite{Eldar-JFA-03} since it keeps the observed samples unchanged\footnote{Existence of a sample consistent reconstruction in $PW_\omega(G)$ requires that $PW_\omega(G) \oplus L_2(\Scc) = \mathbb{R}^N$~\cite{Eldar-JFA-03}.}, i.e., $\hat{\fv}_\Sc = \fv_\Sc$. Moreover, it is easy to see that if the original signal $\fv \in PW_\omega(G)$, then $\hat{\fv} = \fv$.
%


\begin{table*}[t]
\centering
\caption{Summary of uniqueness conditions and sampling set selection criteria when the graph Fourier basis is known}
\label{tab:known_spectrum results}
\begin{tabular}{p{1.6in}|p{1.85in}|p{1.1in}|p{1.9in}}
\hline 
Assumption & Objective & Optimality criteria & Algorithm  \\ \hline \hline

bandlimited $\fv$, noise free samples  & Unique reconstruction & full column rank $\Um_{\Sc\Rc}$ & Gaussian elimination, greedy \cite{Shomorony-GSIP-14} \\ \hline

\multirow{2}{*}{bandlimited $\fv$, noisy samples} 
&  minimum reconstruction MSE & $\min \trace[(\Um_{\Sc\Rc}^\top\Um_{\Sc\Rc})^{-1}]$ & Gaussian elimination with pivoting, \\ \cline{2-3}
& minimum worst case reconstruction error & $\min \sigma_{\min}(\Um_{\Sc\Rc})$ & greedy~\cite{Chen-arxiv-15}, convex optimization~\cite{Boyd-CVX} \\ \hline

approximately bandlimited $\fv$, noise free samples & minimum worst case reconstruction error & $\min \sigma_{\min}(\Um_{\Sc\Rc})$ & Gaussian elimination with pivoting, greedy~\cite{Chen-arxiv-15}, convex optimization~\cite{Boyd-CVX}
\\ \hline 
\end{tabular} 
\end{table*}

\subsection{Issue of Stability and Choice of Sampling set}
\label{sec:stability_intro}
Note that selecting a sampling set $\Sc$ for $PW_\omega(G)$ amounts to selecting a set of rows of $\Um_{\Vc\Rc}$. It is always possible to find a sampling set of size $r = \dim{PW_\omega(G)}$ that uniquely determines signals in $PW_\omega(G)$ as proven below. 
\begin{proposition}
For any $PW_\omega(G)$, there always exists a uniqueness set $\Sc$ of size $|\Sc| = r$. 
\end{proposition}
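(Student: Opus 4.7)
My plan is to invoke Corollary~\ref{thm:uniqueness_U} and reduce the claim to a standard rank fact about the matrix $\Um_{\Vc\Rc}$. Specifically, by Corollary~\ref{thm:uniqueness_U}, it suffices to exhibit an index set $\Sc \subseteq \Vc$ with $|\Sc|=r$ such that the $r\times r$ submatrix $\Um_{\Sc\Rc}$ has full column rank. So the problem reduces to: every $N\times r$ matrix of rank $r$ has $r$ linearly independent rows.

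First, I would establish that $\Um_{\Vc\Rc}$ has full column rank $r$. The columns of $\Um_{\Vc\Rc}$ are the first $r$ (generalized) eigenvectors $\uv^1,\ldots,\uv^r$ of $\Lm$, and by construction these are part of a basis of $\RR^N$ (either genuine eigenvectors when $\Lm$ is diagonalizable, or Jordan generalized eigenvectors otherwise, as noted in the paragraph following the definition of the GFT). Hence they are linearly independent, so $\text{rank}(\Um_{\Vc\Rc}) = r$.

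Next, since the column rank equals the row rank, $\Um_{\Vc\Rc}$ must also have row rank $r$, so there exist $r$ row indices $i_1,\ldots,i_r \in \Vc$ such that the corresponding rows of $\Um_{\Vc\Rc}$ are linearly independent. Take $\Sc = \{i_1,\ldots,i_r\}$. Then $\Um_{\Sc\Rc}$ is an $r\times r$ matrix whose rows are linearly independent, hence it is invertible and in particular has full column rank. By Corollary~\ref{thm:uniqueness_U}, $\Sc$ is a uniqueness set for $PW_\omega(G)$, and $|\Sc| = r$ by construction.

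I do not expect any real obstacle here: the statement is essentially a rephrasing of the elementary fact that any matrix attains its rank on some square submatrix, once one accepts Corollary~\ref{thm:uniqueness_U} and the basis property of the (generalized) eigenvectors. The only minor subtlety is making the linear independence of the $\uv^i$ explicit in the possibly non-diagonalizable case, which is handled by appealing to the Jordan normal form as already done in the paper.
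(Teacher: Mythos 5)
Your proof is correct and follows essentially the same route as the paper: establish that $\Um_{\Vc\Rc}$ has full column rank $r$ from the linear independence of the (generalized) eigenvectors, use the equality of row rank and column rank to extract $r$ linearly independent rows, and conclude via Corollary~\ref{thm:uniqueness_U}. The extra care you take with the non-diagonalizable case is a welcome but minor addition.
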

\begin{proof}
Since $\{\uv^i\}_{i=1}^r$ are linearly independent, the matrix $\Um_{\Vc\Rc}$ has full column rank equal to $r$. Further, since the row rank of a matrix equals its column rank, we can always find a linearly independent set $\Sc$ of $r$ rows such that $\Um_{\Sc\Rc}$ has full rank that equals $r$, thus proving our claim. 
\end{proof}
In most cases picking $r$ nodes randomly gives a full rank $\Um_{\Sc\Rc}$. However, all sampling sets of given size are not equally good. A bad choice of $\Sc$ can give an ill-conditioned $\Um_{\Sc\Rc}$ which in turn leads to an unstable reconstruction $\hat{\fv}$. 
Stability of reconstruction is important when the true signal $\fv$ is only approximately bandlimited (which is the case for most signals in practice) or when the samples are noisy. The reconstruction error in this case depends not only on noise and model mismatch but also on the choice of sampling set. The best sampling set achieves the smallest reconstruction error. 

\vspace{0.5\baselineskip}
\subsubsection{Effect of noise}
We first consider the case when the observed samples are noisy. Let $\fv \in PW_\omega(G)$ be the true signal and $\nv \in \mathbb{R}^{|\Sc|}$ be the noise introduced during sampling. The observed samples are then given by $\yv_\Sc = \fv_\Sc + \nv$. Using \eqref{eq:bl_recon}, we get the following reconstruction
\begin{equation}
\hat{\fv} = \Um_{\Vc\Rc}\Um_{\Sc\Rc}^+\fv_\Sc + \Um_{\Vc\Rc}\Um_{\Sc\Rc}^+\nv.
\end{equation}
Since $\fv \in PW_\omega(G)$, $\Um_{\Vc\Rc}\Um_{\Sc\Rc}^+\fv_\Sc = \fv$. The reconstruction error equals $\ev = \hat{\fv} - \fv = \Um_{\Vc\Rc}\Um_{\Sc\Rc}^+\nv$. If we assume that the entries of $\nv$ are iid with zero mean and unit variance, then the covariance matrix of the reconstruction error is given by
\begin{equation}
\Em = \mathbb{E}[\ev\ev^\top] = \Um_{\Vc\Rc} (\Um_{\Sc\Rc}^\top\Um_{\Sc\Rc})^{-1} \Um_{\Vc\Rc}^\top.
\end{equation}
Different costs can be defined to measure the reconstruction error as a function of the error covariance matrix. These cost functions are based on optimal design of experiments~\cite{Boyd-CVX}. If we define the optimal sampling set $\Sc^\text{opt}$ of size $m$, as the set which minimizes the mean squared error, then assuming $\Um_{\Vc\Rc}$ has orthonormal columns, we have
\begin{equation}
\Sc^\text{A-opt} = \argmin_{|\Sc| = m} \; \trace[\Em] = \argmin_{|\Sc| = m} \; \trace[(\Um_{\Sc\Rc}^\top\Um_{\Sc\Rc})^{-1}].
\label{eq:a-opt}
\end{equation}
This is analogous to the so-called $A$-optimal design. Similarly, minimizing the maximum eigenvalue of the error covariance matrix leads to $E$-optimal design. For an orthonormal $\Um_{\Vc\Rc}$, the optimal sampling set with this criterion is given by
\begin{equation}
\Sc^\text{E-opt} = \argmin_{|\Sc| = m} \; \lambda_{\max}(\Em) = \argmax_{|\Sc|=m} \; \sigma_{\min}(\Um_{\Sc\Rc}),
\label{eq:e-opt}
\end{equation}
where $\sigma_{\min}(.)$ 
denotes the smallest singular value of a matrix. It can be thought of as a sampling set which minimizes the worst case reconstruction error. The above criterion is equivalent to the one proposed in \cite{Chen-arxiv-15}. Further, one can show that when $\Um_{\Vc\Rc}$ does not have orthonormal columns, ~\eqref{eq:a-opt} and~\eqref{eq:e-opt} produce sampling sets that minimize upper bounds on the mean squared and worst case reconstruction errors respectively. Note that both $A$ and $E$-optimality criteria lead to combinatorial problems, but it is possible to develop greedy approximate solutions to these problems.

So far we assumed that the true signal $\fv \in PW_\omega(G)$ and hence, $\Um_{\Vc\Rc}\Um_{\Sc\Rc}^+\fv_\Sc = \fv$. However, in most applications, the signals are only approximately bandlimited. The reconstruction error in such a case is analyzed next.

%
\vspace{0.5\baselineskip}
\subsubsection{Effect of model mismatch}
Let $\Pm = \Um_{\Vc\Rc}\Um_{\Vc\Rc}^\top$ be the projector for $PW_\omega(G)$ and $\Qm = \Sm\Sm^\top$ be the projector for $L_2(\Sc)$. 
Assume that the true signal is given by $\fv = \fv^* + \Delta \fv$, where $\fv^* = \Pm \fv$ is the bandlimited component of the signal and $\Delta \fv = \Pm^\perp\fv$ captures the ``high-pass component'' (i.e., the model mismatch). If we use \eqref{eq:bl_recon} for reconstructing $\fv$, then a tight upper bound on the reconstruction error~\cite{Eldar-JFA-03} is given by
\begin{equation}
\|\fv - \hat{\fv}\| \leq \frac{1}{\cos(\theta_{\max})} \|\Delta\fv\|, 
\label{eq:error_bound}
\end{equation}
where $\theta_{\max}$ is the maximum angle between subspaces $PW_\omega(G)$ and $L_2(\Sc)$ defined as
\begin{equation}
\cos(\theta_{\max}) = \inf_{\fv \in PW_\omega(G), \|\fv\| = 1} \|\Qm\fv\|. 
\label{eq:theta_max}
\end{equation}
$\cos(\theta_{\max}) > 0$ when the uniqueness condition in Theorem~\ref{thm:uniqueness} is satisfied and the error is bounded. Intuitively, the above equation says that for the worst case error to be minimum, the sampling and reconstruction subspaces should be as aligned as possible.

We define an optimal sampling set $\Sc^\text{opt}$ of size $m$ for $PW_\omega(G)$  as the set which minimizes the worst case reconstruction error. Therefore, $L_2(\Sc^\text{opt})$ makes the smallest maximum angle with $PW_\omega(G)$. It is easy to show that $\cos(\theta_{\max}) = \sigma_{\min}(\Um_{\Sc\Rc})$. Thus, to find this set we need to solve a similar problem as~\eqref{eq:e-opt}. 
As stated before, this problem is combinatorial. It is possible to give a greedy algorithm to get an approximate solution. A simple greedy heuristic to approximate $\Sc^\text{opt}$ is to perform column-wise Gaussian elimination over $\Um_{\Vc\Rc}$ with partial row pivoting. The indices of the pivot rows in that case form a good estimate of $\Sc^\text{opt}$ in practice. 

Table~\ref{tab:known_spectrum results} summarizes the different set selection criteria and corresponding search algorithms under various assumptions about the signal. However, the methods described above require computation of many eigenvectors of the variation operator $\Lm$.
We circumvent this issue in the next section, by defining graph spectral proxies based on powers of $\Lm$. These spectral proxies do not require eigen-decomposition of $\Lm$ and still allow us to define a measure of quality of sampling sets. As we will show, these proxies arise naturally in the expression for the bound on the reconstruction error.
Thus, a sampling set optimal with respect to these spectral proxies ensures a small reconstruction error bound.


\section{Sampling Set Selection using Graph Spectral Proxies}
\label{sec:unknown_spectrum}
As discussed earlier, graphs considered in most real applications are very large. Hence, computing and storing the graph Fourier basis explicitly is
often 
practically infeasible. 
We now present techniques that allow us to 
express the condition for unique bandlimited reconstruction and methods for sampling set selection via simple operations using the variation operator. The following discussion holds for any choice of the variation operator $\Lm$ in Table~\ref{tab:gft_bases}.

\subsection{Cutoff Frequency}
In order to obtain a measure of quality for a sampling set $\Sc$, we first find the cutoff frequency associated with it, which can be defined as the largest frequency $\omega$ such that $\Sc$ is a uniqueness set for $PW_\omega(G)$.
It follows from Theorem \ref{thm:uniqueness} that, for $\Sc$ to be a uniqueness set of $PW_\omega(G)$, $\omega$  needs to be less than the minimum possible bandwidth that a signal in $L_2(\Sc^c)$ can have.
This would ensure that no signal from $L_2(\Sc^c)$ can be a part of $PW_\omega(G)$. Thus, the cutoff frequency $\omega_c(\Sc)$ for a sampling set $\Sc$ can be expressed as:
\begin{equation}
\omega_c(\Sc) \defeq \min_{\phi \in L_2(\Sc^c), \; \phi \neq \zerov} \omega(\phi).
\label{eq:cutoff}
\end{equation}
To use the equation above, we first need a tool to 
approximately compute the bandwidth $\omega(\phi)$ of any given signal $\phi$ without computing the Fourier coefficients explicitly.
Our proposed method for bandwidth estimation is 
based on the following definition:  
\begin{definition}[Graph Spectral Proxies]
For any signal $\fv \neq \zerov$, we define its $k^\text{th}$ spectral proxy $\omega_k(\fv)$ with $k \in \mathbb{Z}^+$ as
\begin{equation}
\omega_k(\fv) \defeq \left( \frac{\| \Lm^k \fv \|}{\|\fv\|} \right)^{1/k}.
\end{equation}
\end{definition}
\noindent For an operator $\Lm$ with real eigenvalues and eigenvectors, $\omega_k(\fv)$ can be shown to increase monotonically with $k$:
\begin{equation}
\label{eq:mom_order}
\forall \fv, k_1 < k_2 \Rightarrow \omega_{k_1}(\fv) \leq \omega_{k_2}(\fv).
\end{equation}
These quantities are bounded from above, as a result, $\lim_{k \rightarrow \infty} \omega_k(\fv)$ exists for all $\fv$. Consequently, it is easy to prove that if $\omega(\fv)$ denotes the bandwidth of a signal $\fv$, then
\begin{equation}
\label{eq:inf_order}
\forall k > 0, \; \omega_k(\fv) \leq \lim_{j\rightarrow \infty} \omega_j(\fv) = \omega(\fv). \\
\end{equation}
Note that~\eqref{eq:inf_order} also holds for an asymmetric $\Lm$ that has complex eigenvalues and eigenvectors. The proofs of \eqref{eq:mom_order} and \eqref{eq:inf_order} are provided in the Appendix.
These properties give us an important insight: as we increase the value of $k$, the spectral proxies tend to have a value close to the actual bandwidth of the signal, i.e., they essentially indicate the frequency localization of the signal energy. 
Therefore, using $\omega_k(\phi)$ as a proxy for $\omega(\phi)$ (i.e. bandwidth of $\phi$) is justified and this leads us to define the \emph{cut-off frequency estimate of order k} as
\begin{equation}
\label{eq:cutoff_def}
\Omega_k(\Sc) \defeq \min_{\phi \in L_2(\Sc^c)} \omega_k(\phi) = \min_{\phi \in L_2(\Sc^c)} \left( \frac{\|\Lm^k \phi\|}{\| \phi \|} \right)^{1/k}.
\end{equation}
Using the definitions of $\Omega_k(\Sc)$ and $\omega_c(\Sc)$ along with (\ref{eq:mom_order}) and (\ref{eq:inf_order}), we conclude that for any $k_1 < k_2$:
\begin{equation}
\omega_c(\Sc) \geq \lim_{k \rightarrow \infty} \Omega_k(\Sc) \geq \Omega_{k_2}(\Sc) \geq \Omega_{k_1}(\Sc).
\label{eq:ordering}
\end{equation}
Using (\ref{eq:ordering}) and \eqref{eq:cutoff}, we now state the following proposition:
\begin{proposition}
\label{prop:practical}
For any $k$, $\Sc$ is a uniqueness set for $PW_\omega(G)$ if, $\omega < \Omega_k(\Sc)$. $\Omega_k(\Sc)$ can be computed from (\ref{eq:cutoff_def}) as
\begin{equation}
\label{eq:compute_cutoff}
\Omega_k(\Sc) 
=  \left[ \min_{\psi}  \frac{\psi^\top ((\Lm^\top)^{k}\Lm^k)_{\Sc^c} \psi}{\psi^\top \psi} \right]^{1/2k}
= (\sigma_{1,k})^{1/2k},
\end{equation}
where $\sigma_{1,k}$
denotes the smallest eigenvalue of the reduced matrix $((\Lm^\top)^{k}\Lm^k)_{\Sc^c}$.
Further, if $\psi_{1,k}$ is the corresponding eigenvector, and $\phi_k^*$ minimizes $\omega_k(\phi)$ in (\ref{eq:cutoff_def}) (i.e. it approximates the smoothest possible signal in $L_2(\Sc^c)$), then 
\begin{equation}
\phi_k^*(\Sc^c) = \psi_{1,k}, \quad \phi_k^*(\Sc) = \zerov.
\end{equation}
\end{proposition}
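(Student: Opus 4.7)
The plan is to split the proposition into two independent parts and treat each separately, since the first is essentially a consequence of earlier results and the second is a Rayleigh quotient computation.

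First, for the uniqueness claim, I would invoke Theorem~\ref{thm:uniqueness} together with the ordering in~(\ref{eq:ordering}). By definition, $\omega_c(\Sc) = \min_{\phi \in L_2(\Sc^c), \phi \neq \zerov} \omega(\phi)$, so any nonzero $\phi \in L_2(\Sc^c)$ satisfies $\omega(\phi) \geq \omega_c(\Sc)$. If $\omega < \omega_c(\Sc)$, then every nonzero $\phi \in L_2(\Sc^c)$ has bandwidth strictly greater than $\omega$, so $\phi \notin PW_\omega(G)$, which gives $PW_\omega(G) \cap L_2(\Sc^c) = \{\zerov\}$ and hence, by Theorem~\ref{thm:uniqueness}, $\Sc$ is a uniqueness set. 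Combined with~(\ref{eq:ordering}), $\omega < \Omega_k(\Sc) \leq \omega_c(\Sc)$, this yields the first claim.

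Second, for the computation of $\Omega_k(\Sc)$, I would rewrite the minimization in~(\ref{eq:cutoff_def}) in terms of the nonzero block of $\phi$. Since $\phi \in L_2(\Sc^c)$ means $\phi_\Sc = \zerov$, writing $\psi = \phi_{\Sc^c}$ gives
\begin{equation}
\|\Lm^k \phi\|^2 = \phi^\top (\Lm^\top)^k \Lm^k \phi = \psi^\top \bigl((\Lm^\top)^k \Lm^k\bigr)_{\Sc^c} \psi,
\end{equation}
because only the rows and columns of $(\Lm^\top)^k \Lm^k$ indexed by $\Sc^c$ are picked up by $\phi$. Similarly $\|\phi\|^2 = \psi^\top \psi$. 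Raising the ratio to the $1/(2k)$ power and minimizing over nonzero $\psi$ yields the Rayleigh quotient of the symmetric positive semidefinite matrix $\bigl((\Lm^\top)^k \Lm^k\bigr)_{\Sc^c}$. By the Courant--Fischer theorem, this minimum is its smallest eigenvalue $\sigma_{1,k}$, attained at the corresponding eigenvector $\psi_{1,k}$. This gives $\Omega_k(\Sc) = (\sigma_{1,k})^{1/2k}$ and identifies $\phi_k^*$ as the signal with $\phi_k^*(\Sc^c) = \psi_{1,k}$ and $\phi_k^*(\Sc) = \zerov$.

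There is no real obstacle here: the uniqueness half is a direct bookkeeping argument using the monotonicity chain already established, and the computational half is a standard reduction of a constrained quadratic minimization to an eigenvalue problem on the principal submatrix indexed by $\Sc^c$. The only point requiring minor care is ensuring that the correspondence $\phi \leftrightarrow \psi$ is a bijection between nonzero elements of $L_2(\Sc^c)$ and nonzero vectors in $\mathbb{R}^{|\Sc^c|}$, which is immediate from the definition of $L_2(\Sc^c)$.
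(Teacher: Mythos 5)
Your proposal is correct and matches the argument the paper intends: the uniqueness half follows from Theorem~\ref{thm:uniqueness} combined with the chain $\omega < \Omega_k(\Sc) \leq \omega_c(\Sc)$ from~(\ref{eq:ordering}) and~(\ref{eq:cutoff}), and the computational half is the standard restriction of the Rayleigh quotient of $((\Lm^\top)^k\Lm^k)$ to the principal submatrix indexed by $\Sc^c$, minimized via Courant--Fischer. No gaps; your treatment is, if anything, slightly more explicit than the paper's, which states the proposition directly from these ingredients.
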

\noindent We note from (\ref{eq:ordering}) that to get a better estimate of the true cut-off frequency, one simply needs a higher $k$. 
Therefore, there is a trade-off between accuracy of the estimate on the one hand, and complexity and numerical stability on the other (that arise by taking higher powers of $\Lm$). 

\subsection{Best Sampling Set of Given Size}
As shown in Proposition~\ref{prop:practical}, $\Omega_k(\Sc)$ is an estimate of the smallest bandwidth that a signal in $L_2(\Sc^c)$ can have and any signal in $PW_\omega(G)$ is uniquely sampled on $\Sc$ if $\omega < \Omega_k(\Sc)$. Intuitively, we would like the projection of $L_2(\Sc^c)$ along $PW_\omega(G)$ to be as small as possible.
Based on this intuition, we propose the following optimality criterion for selecting the best sampling set of size $m$:
\begin{equation}
\Sc^{\text{opt}}_k = \argmax_{|\Sc| = m} \Omega_k(\Sc).
\label{eq:opt_set}
\end{equation}
To motivate the above criterion more formally, let $\Pm$ denote the projector for $PW_\omega(G)$. The minimum gap~\cite{Knyazev-JFA-10} between the two subspaces $L_2(\Sc^c)$ and $PW_\omega(G)$ is given by:
\begin{align}
\inf_{\fv \in L_2(\Scc), \|\fv\|=1} \|\fv - \Pm \fv\| 
&= \sqrt{\sum_{i:\;\omega < \lambda_i} |\tilde{\fv}^*_i|^2} \nonumber \\
&\geq \sqrt{\sum_{i\in \Ic} |\tilde{\fv}^*_i|^2},
\label{eq:min_gap}
\end{align}
where $\Ic = \{i: \omega <\lambda_i \leq \Omega_k(\Sc)\}$ and $\tilde{\fv}^*_i$ denotes the $i^\text{th}$ GFT coefficient of the minimizer $\fv^*$ for the left hand side. The inequality on the right hand side holds because $\Omega_k(\Sc)$ is the smallest bandwidth that any signal in $L_2(\Scc)$ can have. Eq.~\eqref{eq:min_gap} shows that maximizing $\Omega_k(\Sc)$ increases the lower bound on the minimum gap between $L_2(\Sc^c)$ and $PW_\omega(G)$. 
The minimum gap equals $\cos(\theta_{\max})$ as defined in \eqref{eq:theta_max}~\cite{Knyazev-JFA-10}. Thus, maximizing $\Omega_k(\Sc)$ increases the lower bound on $\cos(\theta_{\max})$ which, in turn, minimizes the upper bound on the reconstruction error $\|\fv - \hat{\fv}\|$ given in \eqref{eq:error_bound}, where the original signal $\fv \notin PW_\omega(G)$ and $\hat{\fv} \in PW_\omega(G)$ is obtained by~\eqref{eq:bl_recon}.

\vspace{.25\baselineskip}
We now show that $\Omega_k(\Sc)$ also arises in the bound on the reconstruction error when the reconstruction is obtained by variational energy minimization:
\begin{equation}
\hat{\fv}_m = \argmin_{\yv \in \mathbb{R}^N} \|\Lm^m\yv\| \text{ subject to } \yv_\Sc = \fv_\Sc.
\label{eq:var_recon}
\end{equation}
It was shown in~\cite{Pesenson-CA-09} that if $\fv \in PW_\omega(G)$, then the reconstruction error $\|\hat{\fv}_m - \fv\|/\|\fv\|$, for a given $m$, is upper-bounded by $2(\omega/\Omega_1(\Sc))^m$. This bound is suboptimal and can be improved by replacing $\Omega_1(\Sc)$ with $\Omega_k(\Sc)$ (which, from~\eqref{eq:ordering}, is at least as large as $\Omega_1(\Sc)$) for any $k \leq m$, as shown in the following theorem:
\begin{theorem}
\label{thm:var_error}
Let $\hat{\fv}_m$ be the solution to \eqref{eq:var_recon} for a signal \hbox{$\fv \in PW_\omega(G)$}. Then, for any $k \leq m$,
\begin{equation}
\|\hat{\fv}_m - \fv\| \leq 2\left(\frac{\omega}{\Omega_k(\Sc)} \right)^m \|\fv\|.
\label{eq:var_err_bound}
\end{equation}
\end{theorem}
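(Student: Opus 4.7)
The plan is to observe that the reconstruction error $\ev := \hat{\fv}_m - \fv$ lives in $L_2(\Sc^c)$, because $\hat{\fv}_m$ and $\fv$ agree on $\Sc$ by the constraint in \eqref{eq:var_recon}. This is the key structural fact that lets the spectral proxies of $\Sc$ control the error.

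First I would lower bound $\|\Lm^m \ev\|$ in terms of $\|\ev\|$. Since $\ev \in L_2(\Sc^c)$, the definition of $\Omega_k(\Sc)$ gives $\omega_k(\ev) \geq \Omega_k(\Sc)$, and by the monotonicity property \eqref{eq:mom_order}, for any $m \geq k$ we have $\omega_m(\ev) \geq \omega_k(\ev) \geq \Omega_k(\Sc)$. Unwinding the definition of $\omega_m$ yields
\begin{equation}
\|\Lm^m \ev\| \geq \Omega_k(\Sc)^m \, \|\ev\|.
\end{equation}

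Next I would upper bound $\|\Lm^m \ev\|$ in terms of $\|\fv\|$. By the triangle inequality, $\|\Lm^m \ev\| \leq \|\Lm^m \hat{\fv}_m\| + \|\Lm^m \fv\|$. Since $\fv$ itself is feasible for the variational problem \eqref{eq:var_recon} (it trivially satisfies $\fv_\Sc = \fv_\Sc$), optimality of $\hat{\fv}_m$ gives $\|\Lm^m \hat{\fv}_m\| \leq \|\Lm^m \fv\|$, so $\|\Lm^m \ev\| \leq 2 \|\Lm^m \fv\|$. Finally, bandlimitedness of $\fv$ expands it as $\fv = \sum_{i=1}^r \tilde{\fv}_i \uv^i$ with $|\lambda_i| \leq \omega$, and applying $\Lm^m$ term-by-term followed by norm estimates gives $\|\Lm^m \fv\| \leq \omega^m \|\fv\|$ (when $\Um$ is orthogonal this is immediate from Parseval; in the non-orthogonal case one invokes the definition of bandwidth and the same argument used to prove \eqref{eq:inf_order} in the appendix).

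Combining the two bounds yields $\Omega_k(\Sc)^m \|\ev\| \leq 2 \omega^m \|\fv\|$, which is exactly \eqref{eq:var_err_bound}.

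The only mildly subtle point, rather than a real obstacle, is justifying $\|\Lm^m \fv\| \leq \omega^m \|\fv\|$ uniformly across the different variation operators in Table~\ref{tab:gft_bases}; for symmetric $\Lm$ it is Parseval, and for non-diagonalizable or asymmetric $\Lm$ one must appeal to the same argument that underlies \eqref{eq:inf_order} (namely that $\omega_m(\fv) \leq \omega(\fv)$ for every $m$), which the appendix already establishes. Everything else is a short chain of definitions, the monotonicity of the proxies, and the optimality of $\hat{\fv}_m$.
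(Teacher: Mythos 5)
Your proposal is correct and follows essentially the same route as the paper: error in $L_2(\Sc^c)$ controlled below by the spectral proxy, triangle inequality plus feasibility of $\fv$ giving $\|\Lm^m(\hat{\fv}_m-\fv)\| \leq 2\|\Lm^m\fv\|$, and then $\omega_m(\fv) \leq \omega$ via \eqref{eq:inf_order}. The only cosmetic difference is that you invoke the monotonicity \eqref{eq:mom_order} on the error signal directly where the paper cites the ordering \eqref{eq:ordering} of $\Omega_k(\Sc)$, which is the same content, and your handling of $\|\Lm^m\fv\| \leq \omega^m\|\fv\|$ in the non-orthogonal case correctly defers to the appendix argument behind \eqref{eq:inf_order}, just as the paper does.
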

\begin{proof}
Note that $(\hat{\fv}_m - \fv) \in L_2(\Scc)$. Therefore, from~\eqref{eq:cutoff_def} 
\begin{align}
\|\hat{\fv}_m - \fv\| &\leq \frac{1}{(\Omega_m(\Sc))^m} \|\Lm^m(\hat{\fv}_m - \fv)\| \nonumber \\
\label{eq:tri_ineq}
&\leq \frac{1}{(\Omega_m(\Sc))^m} (\|\Lm^m\hat{\fv}_m\| + \|\Lm^m\fv\|) \\
\label{eq:f_hat_min}
&\leq \frac{2}{(\Omega_m(\Sc))^m} \|\Lm^m\fv\| \\
\label{eq:assume}
&\leq 2\left(\frac{\omega_m(\fv)}{\Omega_m(\Sc)} \right)^m \|\fv\| \\
&\leq 2\left(\frac{\omega}{\Omega_k(\Sc)} \right)^m \|\fv\|. \nonumber
\end{align}
\eqref{eq:tri_ineq} follows from triangle inequality. \eqref{eq:f_hat_min} holds because $\hat{\fv}_m$ minimizes $\|\Lm^m\hat{\fv}_m\|$ over all sample consistent signals. \eqref{eq:assume} follows from the definition of $\omega_m(\fv)$ and the last step follows from \eqref{eq:inf_order} and~\eqref{eq:ordering}.  
\end{proof}
%
Note that for the error bound in~\eqref{eq:var_err_bound} to go to zero as $m \to \infty$, $\omega$ must be less than $\Omega_k(\Sc)$. Thus, increasing $\Omega_k(\Sc)$ allows us to reconstruct signals in a larger bandlimited space using the variational method. Moreover, for a fixed $m$ and $k$, a higher value of $\Omega_k(\Sc)$ leads to a lower reconstruction error bound. The optimal sampling set $\Sc^{\rm opt}_k$ in \eqref{eq:opt_set} essentially minimizes this error bound.


\subsection{Finding the Best Sampling Set}
The problem posed in \eqref{eq:opt_set} is a combinatorial problem because
we need to compute $\Omega_k(\Sc)$ for every possible subset $\Sc$ of size $m$.
We therefore formulate a greedy heuristic to get an estimate of the optimal sampling set. 
Starting with an empty sampling set $\Sc$ ($\Omega_k(\Sc) = 0$) we keep adding nodes (from $\Sc^c$) one-by-one while trying to ensure maximum increase in $\Omega_k(\Sc)$ at each step.
To achieve this, we first
consider the following quantity:
\begin{align}
\label{eq:min_lambda}
\lambda^\alpha_k(\onev_\Sc) 
= \min_{\xv} \left( \omega_k(\xv) + \alpha \frac{\xv^\top{\rm diag}(\onev_{\Sc})\xv }{\xv^\top \xv} \right),
\end{align}
where $\onev_\Sc: \Vc \rightarrow \{0,1\}$ denotes the indicator function for the subset $\Sc$ (i.e. $\onev(\Sc) = \onev$ and $\onev(\Sc^c) = \zerov$). Note that the right hand side of~\eqref{eq:min_lambda}
is simply a relaxation of the constraint in~(\ref{eq:cutoff_def}). 
When $\alpha \gg 1$, the components $\xv(\Sc)$ are highly penalized during minimization, hence, forcing values of $\xv$ on $\Sc$ to be vanishingly small.
Thus, if $\xv^\alpha_k(\onev_\Sc)$ is the minimizer in (\ref{eq:min_lambda}), then $[\xv^\alpha_k(\onev_\Sc)](\Sc) \rightarrow \zerov$. 
Therefore, for $\alpha \gg 1$,
\begin{equation}
\label{eq:equiv}
\phi_k^* \approx \xv^\alpha_k(\onev_\Sc),\quad  
\Omega_k(\Sc) \approx \lambda^\alpha_k(\onev_\Sc).
\end{equation}
Now, to tackle the combinatorial nature of our problem, we allow a \emph{binary relaxation} of the indicator $\onev_S$ in~\eqref{eq:min_lambda}, to define the following quantities
\begin{align}
\omega_k^\alpha(\xv,\tv) &= \left( \omega_k(\xv) + \alpha \frac{\xv^\top \text{diag}(\tv) \xv}{\xv^\top \xv} \right), \\
\lambda_k^\alpha(\tv) &= \min_\xv \; \omega_k^\alpha(\xv,\tv),
\label{eq:lambda_relaxed}
\end{align}
where $\tv \in \mathbb{R}^N$. These relaxations circumvent the combinatorial nature of our problem and have been used recently to study graph partitioning based on Dirichlet eigenvalues~\cite{osting-arxiv-13,bourdin10}. Note that making the substitution $\tv = \onev_\Sc$ in~\eqref{eq:lambda_relaxed} gives us~\eqref{eq:min_lambda} exactly.
The effect of adding a node to $\Sc$ on $\Omega_k(\Sc)$ at each step can now be understood by observing the gradient vector $\nabla_\tv \lambda^\alpha_k(\tv)$, at $\tv = \onev_\Sc$.
Note that for any $\xv$ and $\tv$,
\begin{equation}
\frac{d \omega_k^\alpha(\xv,\tv)}{d\tv(i)} = \alpha \left( \frac{\xv(i)}{\|\xv\|} \right)^2.
\end{equation}
When $\tv = \onev_\Sc$, we know that the minimizer of \eqref{eq:lambda_relaxed} with respect to $\xv$ for large $\alpha$ is $\phi_k^*$. Hence,
\begin{equation}
\left. \frac{d \lambda_k^\alpha(\tv)}{d\tv(i)} \right|_{\tv = \onev_\Sc} = \left. \frac{d \omega_k^\alpha(\phi_k^*,\tv)}{d\tv(i)} \right|_{\tv = \onev_\Sc} = \alpha \left( \frac{\phi_k^*(i)}{\|\phi_k^*\|} \right)^2.
\end{equation}
The equation above gives us the desired greedy heuristic - starting with an empty $\Sc$ (i.e., $\onev_\Sc = \zerov$), if at each step, we include the node on which the smoothest signal $\phi_k^* \in L_2(\Sc^c)$ has maximum energy (i.e. $\onev_\Sc(i) \leftarrow 1, i = \text{arg max}_j(\phi_k^*(j))^2$), then $\lambda^\alpha_k(\tv)$ and in effect, the cut-off estimate $\Omega_k(\Sc)$, tend to increase maximally. We summarize the method for estimating $\Sc^{\text{opt}}_k$ in Algorithm \ref{alg:alg1}.

One can show that the cutoff frequency estimate $\Omega_k(\Sc)$  associated with a sampling set can only increase (or remain unchanged) when a node is added to it. This is stated more formally in the following proposition.
\begin{proposition}
Let $\Sc_1$ and $\Sc_2$ be two subsets of nodes of $G$ with $\Sc_1 \subseteq \Sc_2$. Then $\Omega_k(\Sc_1) \leq \Omega_k(\Sc_2)$.
\end{proposition}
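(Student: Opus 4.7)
The plan is to exploit the fact that $\Omega_k(\Sc)$ is defined as an infimum over the subspace $L_2(\Sc^c)$, together with the monotonicity of $L_2(\cdot)$ with respect to set inclusion. This reduces the proposition to a one-line observation about minimizing a fixed functional over nested subspaces.

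First I would note that $\Sc_1 \subseteq \Sc_2$ implies the reverse inclusion on complements, $\Sc_2^c \subseteq \Sc_1^c$. Since $L_2(\Sc)$ is defined as the set of signals vanishing outside $\Sc$, any signal that vanishes outside $\Sc_2^c$ also vanishes outside $\Sc_1^c$. Formally, if $\phi_{\Sc_2} = \zerov$ then certainly $\phi_{\Sc_1} = \zerov$ because $\Sc_1 \subseteq \Sc_2$. Hence $L_2(\Sc_2^c) \subseteq L_2(\Sc_1^c)$.

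Next I would invoke the definition in~\eqref{eq:cutoff_def}:
\begin{equation*}
\Omega_k(\Sc) = \min_{\phi \in L_2(\Sc^c),\; \phi \neq \zerov}\left(\frac{\|\Lm^k \phi\|}{\|\phi\|}\right)^{1/k}.
\end{equation*}
Since the functional $\phi \mapsto (\|\Lm^k\phi\|/\|\phi\|)^{1/k}$ is the same in both cases and we are minimizing it over a smaller set when computing $\Omega_k(\Sc_2)$, the minimum can only increase (or stay the same), yielding $\Omega_k(\Sc_1) \leq \Omega_k(\Sc_2)$.

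There is no real obstacle here; the proposition is essentially a monotonicity statement inherited from the set-theoretic fact $\Sc_1 \subseteq \Sc_2 \Rightarrow L_2(\Sc_2^c) \subseteq L_2(\Sc_1^c)$, combined with the elementary principle that the infimum of a function over a subset is at least as large as its infimum over the containing set. The only thing worth double-checking is the degenerate case $\Sc_2 = \Vc$, in which $L_2(\Sc_2^c) = \{\zerov\}$ and the minimum in~\eqref{eq:cutoff_def} is vacuous; this is handled by the usual convention $\Omega_k(\Vc) = +\infty$, which is consistent with the inequality.
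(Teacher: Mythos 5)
Your proof is correct, but it takes a different route than the paper. The paper first passes through its Proposition~\ref{prop:practical}, which identifies $\Omega_k(\Sc)$ with $(\sigma_{1,k})^{1/2k}$, the $2k$-th root of the smallest eigenvalue of the principal submatrix $((\Lm^\top)^{k}\Lm^k)_{\Sc^c}$, and then invokes the eigenvalue interlacing theorem for symmetric matrices: since $\Sc_1 \subseteq \Sc_2$ gives $\Sc_2^c \subseteq \Sc_1^c$, interlacing yields $\lambda_{\min}\bigl[((\Lm^\top)^{k}\Lm^k)_{\Sc_1^c}\bigr] \leq \lambda_{\min}\bigl[((\Lm^\top)^{k}\Lm^k)_{\Sc_2^c}\bigr]$. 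You instead argue directly from the variational definition in~\eqref{eq:cutoff_def}: the inclusion $L_2(\Sc_2^c) \subseteq L_2(\Sc_1^c)$ means you are minimizing the same functional over a smaller set, so the minimum can only grow. The two arguments are close cousins, since interlacing is itself usually proved by Courant--Fischer, i.e., by exactly the nested-subspace principle you use; but your version is more elementary and self-contained, needing neither the matrix reformulation of Proposition~\ref{prop:practical} nor the (stronger) full interlacing statement, whereas the paper's route fits naturally with its computational framing of $\Omega_k(\Sc)$ as an eigenvalue of a reduced matrix. Your remark about the degenerate case $\Sc_2 = \Vc$, handled by the convention $\Omega_k(\Vc) = +\infty$, is a reasonable point that the paper leaves implicit.
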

This turns out to be a straightforward consequence of the eigenvalue interlacing property for symmetric matrices. 
\begin{theorem}[Eigenvalue interlacing~\cite{Haemers-LAA-95}] Let $\Bm$ be a symmetric $n \times n$ matrix. Let $\Rc = \{1,2,\ldots, r\}$, for $1 \leq r \leq n-1$ and $\Bm_r = \Bm_{\Rc}$. Let $\lambda_k(\Bm_r)$ be the $k$-th largest eigenvalue of $\Bm_r$. Then the following interlacing property holds:
\begin{align*}
\lambda_{r+1}(\Bm_{r+1}) \leq \lambda_r(\Bm_r) \leq &\lambda_r(\Bm_{r+1}) \leq \ldots\\
\ldots \leq &\lambda_2(\Bm_{r+1}) \leq \lambda_1(\Bm_r) \leq \lambda_1(\Bm_{r+1}).
\end{align*}
\end{theorem}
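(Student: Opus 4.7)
The plan is to prove the Cauchy interlacing theorem via the Courant--Fischer min-max characterization of eigenvalues of a symmetric matrix, exploiting the fact that $\Bm_r$ is a principal submatrix of $\Bm_{r+1}$ (obtained by deleting the last row and column).

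First, I would recall the min-max characterization: for any symmetric $m \times m$ matrix $\Am$ with eigenvalues $\lambda_1(\Am) \geq \lambda_2(\Am) \geq \cdots \geq \lambda_m(\Am)$,
\begin{equation*}
\lambda_k(\Am) \;=\; \max_{\substack{V \subseteq \mathbb{R}^m \\ \dim V = k}} \; \min_{\substack{x \in V \\ x \neq 0}} \frac{x^\top \Am x}{x^\top x} \;=\; \min_{\substack{W \subseteq \mathbb{R}^m \\ \dim W = m-k+1}} \; \max_{\substack{x \in W \\ x \neq 0}} \frac{x^\top \Am x}{x^\top x}.
\end{equation*}
The key observation connecting $\Bm_r$ and $\Bm_{r+1}$ is the following: embed $\mathbb{R}^r$ into $\mathbb{R}^{r+1}$ by appending a zero as the $(r+1)$-th coordinate, producing the subspace $E = \{(x,0) : x \in \mathbb{R}^r\}$. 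For any vector $\tilde{x} = (x,0) \in E$, the quadratic form satisfies $\tilde{x}^\top \Bm_{r+1} \tilde{x} = x^\top \Bm_r x$, because $\Bm_r$ is precisely the top-left $r \times r$ block of $\Bm_{r+1}$. This reduces the interlacing to a subspace-counting argument.

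Next, I would establish the two inequalities $\lambda_k(\Bm_r) \leq \lambda_k(\Bm_{r+1})$ and $\lambda_{k+1}(\Bm_{r+1}) \leq \lambda_k(\Bm_r)$ for each $k \in \{1, \ldots, r\}$, which together yield the full chain of inequalities in the statement. For the upper inequality, I would use the max form: any $k$-dimensional subspace $V \subseteq \mathbb{R}^r$ corresponds to a $k$-dimensional subspace $\tilde{V} \subseteq \mathbb{R}^{r+1}$ via the embedding above with the same Rayleigh quotients, so the maximum over $k$-dimensional subspaces of $\mathbb{R}^{r+1}$ is taken over a strictly larger family, proving $\lambda_k(\Bm_r) \leq \lambda_k(\Bm_{r+1})$. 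For the lower inequality, I would use the min form: $\lambda_{k+1}(\Bm_{r+1})$ equals the minimum over $(r-k+1)$-dimensional subspaces of $\mathbb{R}^{r+1}$ of the maximum Rayleigh quotient, and again the embedding $V \hookrightarrow \tilde{V}$ of any $(r-k+1)$-dimensional $V \subseteq \mathbb{R}^r$ preserves Rayleigh quotients and has the correct dimension, so the minimum over $\mathbb{R}^{r+1}$ subspaces of that dimension is at most the corresponding minimum over $\mathbb{R}^r$ subspaces, which equals $\lambda_k(\Bm_r)$.

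The argument is largely mechanical once the min-max characterization is in hand; the main bookkeeping obstacle is keeping track of which form (max-min vs.\ min-max) and which dimension is needed for each of the two inequalities, and verifying that the dimension of the embedded subspace matches the index demanded by Courant--Fischer on the larger space. Once both inequalities are established, chaining them across $k = 1, \ldots, r$ produces the stated interlacing sequence, completing the proof.
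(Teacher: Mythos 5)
Your proof is correct. Note, however, that the paper does not actually prove this statement: it is quoted verbatim as a known result with a citation to Haemers' survey on interlacing eigenvalues, and is used only as a black box to conclude that $\Omega_k(\Sc)$ is monotone under adding nodes to $\Sc$. Your Courant--Fischer argument is the standard textbook proof of Cauchy interlacing for the special case of deleting one row and column, and the details check out: the embedding $x \mapsto (x,0)$ preserves Rayleigh quotients because $\Bm_r$ is the leading principal block of $\Bm_{r+1}$; the max-min form over $k$-dimensional subspaces gives $\lambda_k(\Bm_r) \leq \lambda_k(\Bm_{r+1})$; and the min-max form gives $\lambda_{k+1}(\Bm_{r+1}) \leq \lambda_k(\Bm_r)$, since for the $(r+1)\times(r+1)$ matrix the index $k+1$ corresponds to subspaces of dimension $(r+1)-(k+1)+1 = r-k+1$, exactly matching the dimension needed for $\lambda_k$ of the $r \times r$ matrix. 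Chaining the two families of inequalities over $k = 1,\ldots,r$ yields precisely the stated sequence, so the argument is complete.
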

The above theorem implies that if $\Sc_1 \subseteq \Sc_2$, then $\Sc_2^c \subseteq \Sc_1^c$ and thus, $\lambda_{\min}\left[\left((\Lm^\top)^{k}\Lm^k \right)_{\Sc^c_1}\right] \leq \lambda_{\min}\left[\left((\Lm^\top)^{k}\Lm^k \right)_{\Sc^c_2}\right]$. 
\begin{algorithm}[ht]
\caption{Greedy heuristic for estimating $\Sc^{\text{opt}}_k$}
\label{alg:alg1}
\begin{algorithmic}[1]
\REQUIRE $G = \{\Vc,E\}$, $\Lm$, sampling set size $M$, $k \in \mathbb{Z}^+$.
\ENSURE $\Sc = \{ \emptyset \}$.
\WHILE{$|\Sc|<M$} 
\STATE For $\Sc$, compute smoothest signal $\phi_k^* \in L_2(\Sc^c)$ using Proposition \ref{prop:practical}.
\STATE $v \leftarrow \text{arg max}_i (\phi_k^*(i))^2$.
\STATE $\Sc \leftarrow \Sc \cup v$.
\ENDWHILE
\STATE $\Sc^\text{opt}_k \leftarrow \Sc$.
 \end{algorithmic}
\end{algorithm}
\vspace{0.5\baselineskip}
\subsubsection*{Connection with Gaussian elimination}
From Section~\ref{sec:known_spectrum}, we know that the optimal sampling set can be obtained by maximizing $\sigma_\text{min} \left(\Um_{\Sc\Rc}\right)$ with respect to $\Sc$. A heuristic to obtain good sampling sets is to perform a column-wise Gaussian elimination with pivoting on the eigenvector matrix $\Um$. Then, a sampling set of size $i$ is given by the indices of zeros in the $(i+1)^\text{th}$ column of the echelon form. 
We now show that the greedy heuristic proposed in Algorithm \ref{alg:alg1} is closely related to this rank-revealing Gaussian elimination procedure through the following observation: 
\begin{proposition}
Let $\Phi$ be the matrix whose columns are given by the smoothest signals $\phi_\infty^*$ obtained sequentially after each iteration of Algorithm \ref{alg:alg1} with $k = \infty$, (i.e., $\Phi = \left[\phi^*_\infty|_{|\Sc| = 0} \; \phi^*_\infty|_{|\Sc| = 1}, \; \dots \right]$). Further, let $\Tm$ be the matrix obtained by performing column-wise Gaussian elimination on $\Um$ with partial pivoting. Then, the columns of $\Tm$ are equal to the columns of $\Phi_\infty^*$ within a scaling factor.
\end{proposition}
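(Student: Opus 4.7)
The plan is to prove the proposition by induction on the iteration count $i = |\Sc|$, establishing simultaneously that (a) the smoothest signal $\phi_\infty^*$ in $L_2(\Sc^c)$ lies, up to scaling, in $\text{span}(\uv^1,\ldots,\uv^{i+1})$, and (b) the node chosen by Algorithm~\ref{alg:alg1} at step $i+1$ coincides with the pivot row selected by column-wise Gaussian elimination on $\Um$ with partial pivoting at step $i+1$. I would first interpret the limit $k\to\infty$: since, for any fixed $\phi$, $\omega_k(\phi)$ converges monotonically to the true bandwidth $\omega(\phi)$ by~\eqref{eq:mom_order}--\eqref{eq:inf_order}, the minimizer $\phi_\infty^*$ is precisely (a limit of) the signal in $L_2(\Sc^c)$ of minimum bandwidth. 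The base case $i=0$ is immediate: $L_2(\Vc) = \RR^N$, so the smoothest signal is $\uv^1$, which is also the first column of $\Tm$ since no elimination has been performed yet.

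For the inductive step, the key observation is a dimension count. Assume after $i$ iterations the selected set $\Sc = \{v_1,\ldots,v_i\}$ matches the pivot rows used so far and that $\Um_{\Sc,\{1,\ldots,i+1\}}$ has full row rank (the generic situation). Writing a candidate signal as $\phi = \sum_{j=1}^{i+1} c_j \uv^j$, the constraint $\phi_\Sc = \zerov$ becomes $\Um_{\Sc,\{1,\ldots,i+1\}}\cv = \zerov$, a rank-$i$ linear system on $\RR^{i+1}$. Its null space is exactly one-dimensional, so $\text{span}(\uv^1,\ldots,\uv^{i+1}) \cap L_2(\Sc^c)$ is a line. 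Any signal in this line has bandwidth at most $\lambda_{i+1}$, while any signal with a nontrivial component on $\uv^{i+2},\ldots,\uv^N$ has strictly larger bandwidth (assuming simple eigenvalues $\lambda_{i+1} < \lambda_{i+2}$). Hence $\phi_\infty^*\big|_{|\Sc|=i}$ is unique up to scaling and lies in this one-dimensional intersection.

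Next I would identify this intersection with the $(i+1)$-th column of $\Tm$. By construction of column-wise Gaussian elimination with partial pivoting, after $i$ steps the $(i+1)$-th column of $\Tm$ is a linear combination of $\uv^1,\ldots,\uv^{i+1}$ whose entries at rows $v_1,\ldots,v_i$ have been zeroed out, i.e. it is a nonzero element of $\text{span}(\uv^1,\ldots,\uv^{i+1}) \cap L_2(\Sc^c)$. Since this space is one-dimensional, the $(i+1)$-th column of $\Tm$ and $\phi_\infty^*\big|_{|\Sc|=i}$ are proportional, which is the claim at column $i+1$. Finally, both algorithms pick the next index as $\arg\max_j$ of the squared value of this common vector over nonsampled rows, so the pivot set $\Sc$ continues to coincide at step $i+1$ and the induction closes.

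The main obstacle is the nondegeneracy assumption underlying the dimension count: I need $\Um_{\Sc,\{1,\ldots,i+1\}}$ to have rank $i$ at every step (so the intersection is exactly one-dimensional and $\phi_\infty^*$ is well-defined up to scaling), and the eigenvalues $\lambda_1,\ldots,\lambda_{i+2}$ to be distinct so that ``bandwidth $\leq \lambda_{i+1}$'' genuinely picks out $\text{span}(\uv^1,\ldots,\uv^{i+1})$. Both hold generically, but they must be stated as hypotheses. A secondary subtlety is making ``$k=\infty$'' precise: I would handle it by taking the pointwise limit of $\omega_k$ and invoking continuity of the eigenvalue/eigenvector problem in~\eqref{eq:compute_cutoff} so that $\phi_\infty^*$ is identified with the minimum-bandwidth element of $L_2(\Sc^c)$, rather than literally setting $k=\infty$ in the matrix $(\Lm^\top)^k \Lm^k$.
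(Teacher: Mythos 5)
Your proof is correct and follows essentially the same route as the paper's: identify $\phi^*_\infty$ at step $i$ with the (one-dimensional) intersection of $\mathrm{span}(\uv^1,\ldots,\uv^{i+1})$ with $L_2(\Sc^c)$, observe that the $(i+1)$-th column produced by column-wise Gaussian elimination with partial pivoting is a nonzero element of that same intersection, and match the max-magnitude node selection with the pivot choice. Your induction and explicit rank/distinct-eigenvalue hypotheses simply make precise the genericity assumptions that the paper's terser argument leaves implicit.
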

\begin{proof}
If $\Sc$ is the smallest sampling set for uniquely representing signals in $PW_\omega(G)$ and $r = {\rm dim}\; PW_\omega(G)$, then we have the following:
\begin{enumerate}[leftmargin=15pt]
\item $|\Sc| = r$.
\item The smoothest signal $\phi^*_\infty \in L_2(\Sc^c)$ has bandwidth $\lambda_{r+1}$.
\end{enumerate}
Therefore, $\phi^*_\infty|_{|\Sc|=r}$ is spanned by the first $r+1$ frequency basis elements $\{ \uv_1, \dots, \uv_{r+1}\}$. Further, since $\phi^*_\infty|_{|\Sc|=r}$ has zeroes on exactly $r$ locations, it can be obtained by performing Gaussian elimination on $\uv_{r+1}$ using $\uv_1, \uv_2, \dots, \uv_{r}$. Hence the $(r+1)^\text{th}$ column of $\Phi$ is equal (within a scaling factor) to the $(r+1)^\text{th}$ column of $\Tm$. Pivoting comes from the fact that the $(i+1)^\text{th}$ sampled node is given by the index of the element with maximum magnitude in $\phi_\infty^*|_{|\Sc|=i}$, and is used as the pivot to zeros out elements with same index in subsequent columns. 
\end{proof}
The above result illustrates that Algorithm \ref{alg:alg1} is an iterative procedure that approximates a rank-revealing Gaussian elimination procedure on $\Um_{\Vc\Rc}$. For the known-spectrum case, this is a good heuristic for maximizing $\sigma_{min} \left( \Um_{\Sc\Rc} \right)$. In other words, our method directly maximizes $\sigma_{min}\left( \Um_{\Sc\Rc} \right)$ without going through the intermediate step of computing $\Um_{\Vc\Rc}$. As we shall see in the next subsection, this results in significant savings in both time and space complexity.


\subsection{Complexity and implementation issues}
\begin{table*}[t]
\centering
\caption{Comparison of complexity of different sampling set selection algorithms.}
\label{tab:complexity}
\begin{tabular}{llll}
\hline \hline
                           & Method in~\cite{Shomorony-GSIP-14} & Method in~\cite{Chen-arxiv-15} & Proposed Method \\ \hline
Eigen-pair computations & $O((|E||\Sc| + C|\Sc|^3)T_1)$                         & $O((|E||\Sc| + C|\Sc|^3)T_1)$                     & $O\left(k|E||S|T_2(k)\right)$    \\ \hline
Sampling set search        & $O(N|\Sc|^3)$                        & $O(N|\Sc|^4)$                    & $O(N|\Sc|)$     \\ \hline
Space complexity           & $O(N|\Sc|)$                              & $O(N|\Sc|)$                          & $O(N)$       \\ \hline  
\end{tabular}
\end{table*}
We note that in the algorithm, computing the first eigen-pair of $((\Lm^\top)^{k}\Lm^k)_{\Sc^c}$ is the major step for each iteration. 
There are many efficient iterative methods, such as those based on Rayleigh quotient minimization, for computing the smallest eigen-pair of a matrix \cite{Knyazev-SIAM-01}. The atomic step in all of these methods consists of matrix-vector products. Specifically, in our case, this step involves evaluating the expression $((\Lm^\top)^{k}\Lm^k)_{\Sc^c}\xv$. Note that we do not actually need to compute the matrix $((\Lm^\top)^{k}\Lm^k)_{\Sc^c}$ explicitly, since the expression can be implemented as a sequence of matrix-vector products as 
\begin{equation}
((\Lm^\top)^{k}\Lm^k)_{\Sc^c}\xv = \Id_{\Scc\Vc} \Lm^\top \ldots \Lm^\top \Lm \ldots \Lm \Id_{\Vc\Scc} \xv.
\end{equation}
Evaluating the expression involves $2k$ matrix-vector products and has a complexity of $O(k|E|)$, where $|E|$ is the number of edges in the graph. Moreover, a localized and parallel implementation of this step is possible in the case of sparse graphs. The number of iterations required for convergence of the eigen-pair computation iterations is a function of the eigen-value gaps \cite{Knyazev-SIAM-01} and hence dependent on the graph structure and edge-weights.

For the methods of ~\cite{Shomorony-GSIP-14} and~\cite{Chen-arxiv-15}, one needs to compute a portion of the eigenvector matrix, i.e., $\Um_{\Vc\Sc}$ (assuming $|\Rc| = |\Sc|$). This can be done using block-based Rayleigh quotient minimization methods~\cite{Knyazev-SIAM-01}, block-based Kryolov subspace methods such as Arnoldi/Lanczos iterations or deflation methods in conjunction with single eigen-pair solvers~\cite{saad11}. The complexity of these methods increases considerably as the number of requested eigen-pairs increases, making them impractical. On the other hand, our method requires computing a single eigen-pair at each iteration, making it viable for cases when a large number of samples are required.
Moreover, the sample search steps in the methods of ~\cite{Shomorony-GSIP-14} and~\cite{Chen-arxiv-15} require an SVD solver and a linear system solver, respectively, thus making them much more complex in comparison to our method, where we only require finding the maximum element of a vector. 
Our algorithm is also efficient in terms of space complexity, since at any point we just need to store $\Lm$ and one vector. On the other hand, ~\cite{Shomorony-GSIP-14, Chen-arxiv-15} require storage of at least $|\Sc|$ eigenvectors. 

A summary of the complexities of all the methods is given in Table~\ref{tab:complexity}. The eigen-pair computations for  $\cite{Shomorony-GSIP-14, Chen-arxiv-15}$ are assumed to be performed using a block version of the Rayleigh quotient minimization method, which has a complexity of $O((|E||\Sc| + C|\Sc|^3)T_1)$, where $T_1$ denotes the number of iterations for convergence, and $C$ is a constant. The complexity of computing one eigen-pair in our method is $O(k|E||S|T_2(k))$, where $T_2(k)$ denotes the average number of iterations required for convergence of a single eigen-pair. $T_1$ and $T_2(k)$ required to achieve a desired error tolerance are functions of the eigen-gaps of $\Lm$ and $\Lm^k$ respectively.
In general, $T_2(k) > T_1$, since $\Lm^k$ has lower eigengaps near the smallest eigenvalue. 
Increasing the parameter $k$ further flattens the spectrum of $\Lm^k$ near the smallest eigenvalue leading to an increase in $T_2(k)$, since one has to solve a more ill-conditioned problem.
We illustrate this in the next section through experiments that compare the running times of all the methods.

The choice of the parameter $k$ depends on the desired accuracy -- a larger value of $k$ gives a better sampling set, but increases the complexity proportionally, thus providing a trade-off.  Through experiments, we show in the next section that the quality of the sampling set is more sensitive to choice of $k$ for sparser graphs. This is because increasing $k$ results in the consideration of more \emph{global} information while selecting samples. On the other hand, dense graphs have a lower diameter and there is relatively little information to be gained by increasing $k$.




\section{Experiments}
\label{sec:experiments}
We now numerically evaluate the performance of the proposed work. 
The experiments involve comparing the reconstruction errors and running times of different sampling set selection algorithms in conjunction with consistent bandlimited reconstruction~\eqref{eq:bl_recon}\footnote{Although reconstruction using~\eqref{eq:bl_recon} requires explicit computation of $\Um_{\Vc\Rc}$, there exist efficient localized reconstruction algorithms that circumvent this~\cite{Narang-GlobalSIP-13,wang-tsp-15}. However, in the current work, we restrict our attention to the problem of sampling set selection.}. We compare our approach with the following methods:
\begin{enumerate}
\item [M1:] 
This method ~\cite{Chen-arxiv-15} uses a greedy algorithm to approximate the $\Sc$ that maximizes $\sigma_{\min} (\Um_{\Sc\Rc})$. Consistent bandlimited reconstruction~\eqref{eq:bl_recon} is then used to estimate the unknown samples.
\item [M2:] 
At each iteration $i$, this method~\cite{Shomorony-GSIP-14} finds the representation of $\uv_i$ as $\sum_{j<i} \beta_j \uv_j + \sum_{u \notin \Sc} \alpha_u \onev_u$, where $\onev_u$ is the delta function on $u$. The node $v$ with maximum $|\alpha_v|$ is sampled. Reconstruction is done using~\eqref{eq:bl_recon}.
\end{enumerate}
Both the above methods assume that a portion of the frequency basis is known and the signal to be recovered is exactly bandlimited.
As a baseline, we also compare all sampling set selection methods against uniform random sampling.


\subsection{Examples with Artificial Data}
We first give some simple examples on the following simulated undirected graphs:
\begin{enumerate}
\item [$G1$:] Erd{\"o}s-Renyi random graph (unweighted) with $1000$ nodes and connection probability $0.01$.
\item [$G2$:] Small world graph~\cite{Watts-nature} (unweighted) with $1000$ nodes. The underlying regular graph with degree $8$ is rewired with probability $0.1$.
\item [$G3$:] Barab{\'a}si-Albert random network~\cite{barabasi-science-99} with $1000$ nodes. The seed network is a fully connected graph with $m_0 = 4$ vertices, and each new vertex is connected to $m = 4$ existing vertices randomly. This model, as opposed to $G1$ and $G2$, is a scale-free network, \emph{i.e.}, its degrees follow a power law $P(k) \sim k^{-3}$.
\end{enumerate}
The performance of the sampling methods depends on the assumptions about the true signal and sampling noise. For each of the above graphs, we consider the problem in the following scenarios: 
\begin{enumerate}
\item [$F1$:] The true signal is noise-free and exactly bandlimited with $r = \dim{PW_\omega(G)} = 50$. The non-zero GFT coefficients are randomly generated from $\Nc(1,0.5^2)$.
\item [$F2$:] The true signal is exactly bandlimited with $r = 50$ and non-zero GFT coefficients are generated from $\Nc(1,0.5^2)$. The samples are noisy with additive iid Gaussian noise such that the SNR equals 20dB.
\item [$F3$:] The true signal is approximately bandlimited with an exponentially decaying spectrum. Specifically, the GFT coefficients are generated from $\Nc(1,0.5^2)$, followed by rescaling with the following filter (where $r = 50$):
\begin{equation}
h(\lambda) = 
\begin{cases}
1,  &\lambda < \lambda_r \\
e^{-4(\lambda - \lambda_r)}, & \lambda \geq \lambda_r.
\end{cases}
\end{equation}

\end{enumerate}

We generate 50 signals from each of the three signal models on each of the graphs, use the sampling sets obtained from the all the methods to perform reconstruction and plot the mean of the mean squared error (MSE) for different sizes of sampling sets. For our algorithm, we set the value of $k$ to 2, 8 and 14. The result is illustrated in Figure~\ref{fig:toy_results}. Note that when the size of the sampling set is less than $r = 50$, the results are quite unstable. This is expected, because the uniqueness condition is not satisfied by the sampling set. Beyond $|\Sc| = r$, we make the following observations:

\begin{enumerate}[leftmargin=15pt]
\item For the noise-free, bandlimited signal model F1, all methods lead to zero reconstruction error as soon as the size of the sampling set exceeds the signal cutoff $r = 50$ (error plots for this signal model are not shown). This is expected from the sampling theorem. 
%
It is interesting to note that in most cases, uniform random sampling does equally well, since the signal is noise-free and perfectly bandlimited. 
\item For the noisy signal model F2 and the approximately bandlimited model F3, our method has better or comparable performance in most cases. This indicates that our method is fairly robust to noise and model mismatch. Uniform random sampling performs very badly as expected, because of lack of stability considerations.
\end{enumerate}

\begin{figure*}
        \centering
                \begin{subfigure}[b]{0.32\textwidth}
                \includegraphics[width=\textwidth]{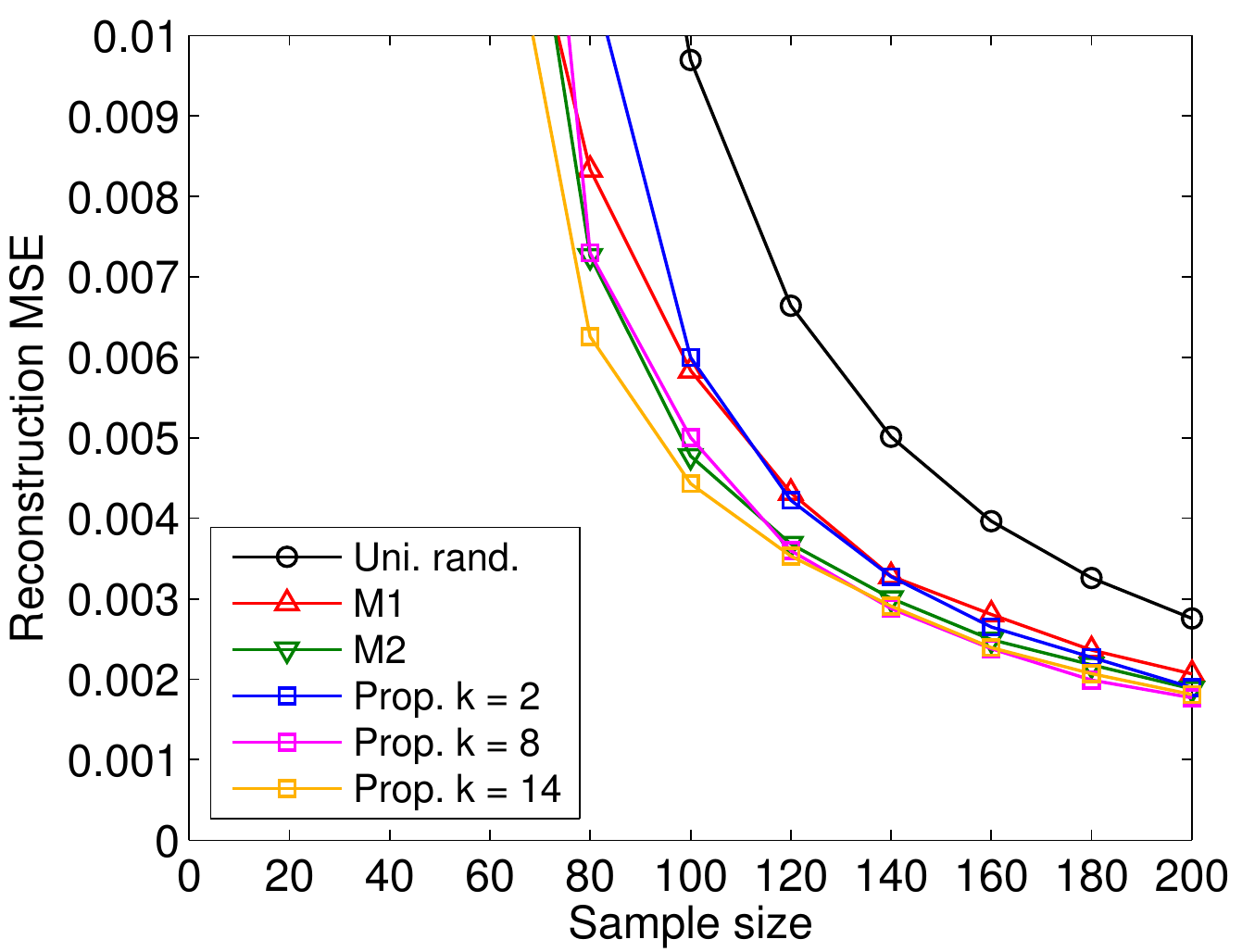}
                \caption{Graph G1 and signal model F2}
                \label{fig:g1_f2}
        \end{subfigure}%
        ~
        \begin{subfigure}[b]{0.32\textwidth}
                \includegraphics[width=\textwidth]{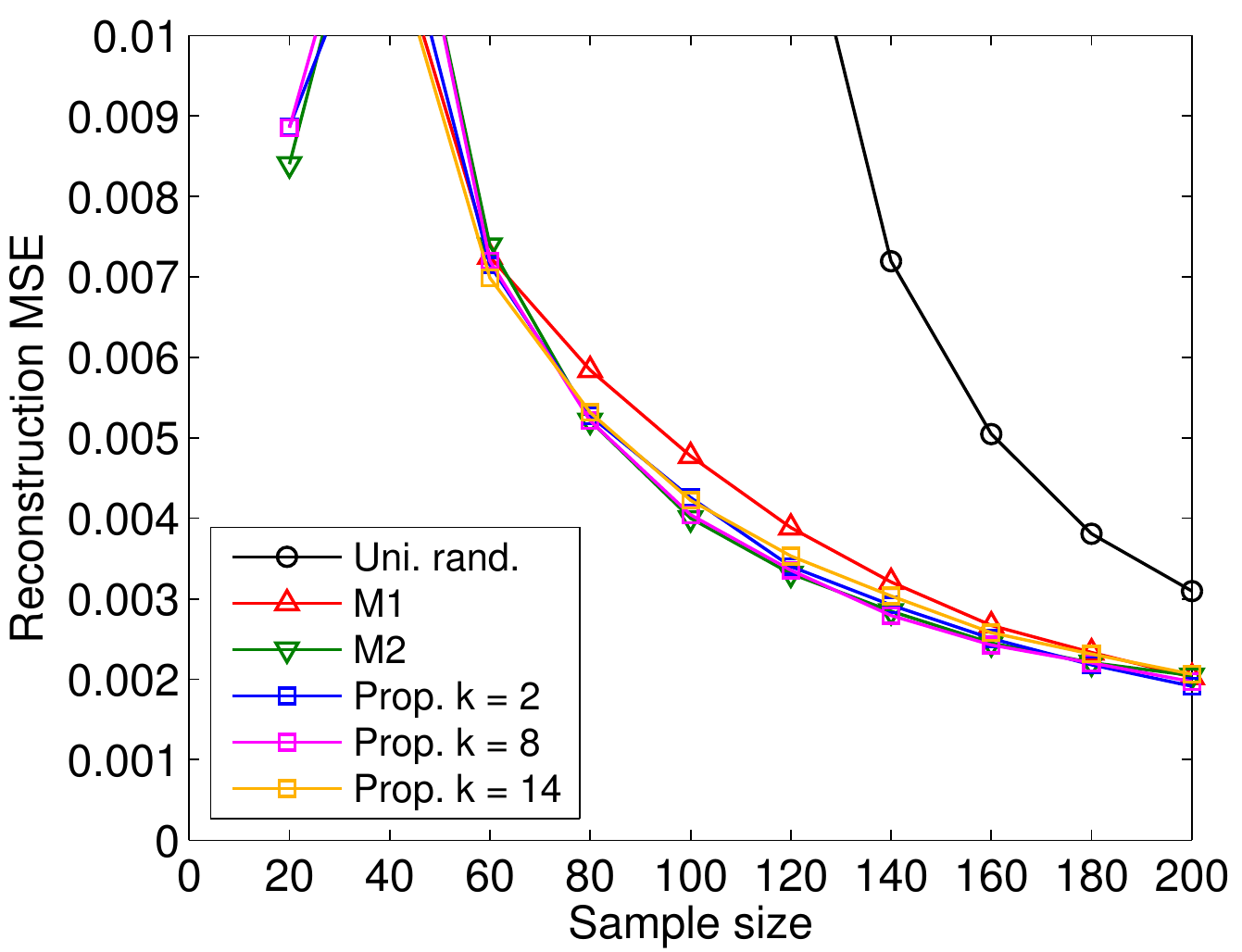}
                \caption{Graph G2 and signal model F2}
                \label{fig:g2_f2}
        \end{subfigure}%
        ~
        \begin{subfigure}[b]{0.32\textwidth}
                \includegraphics[width=\textwidth]{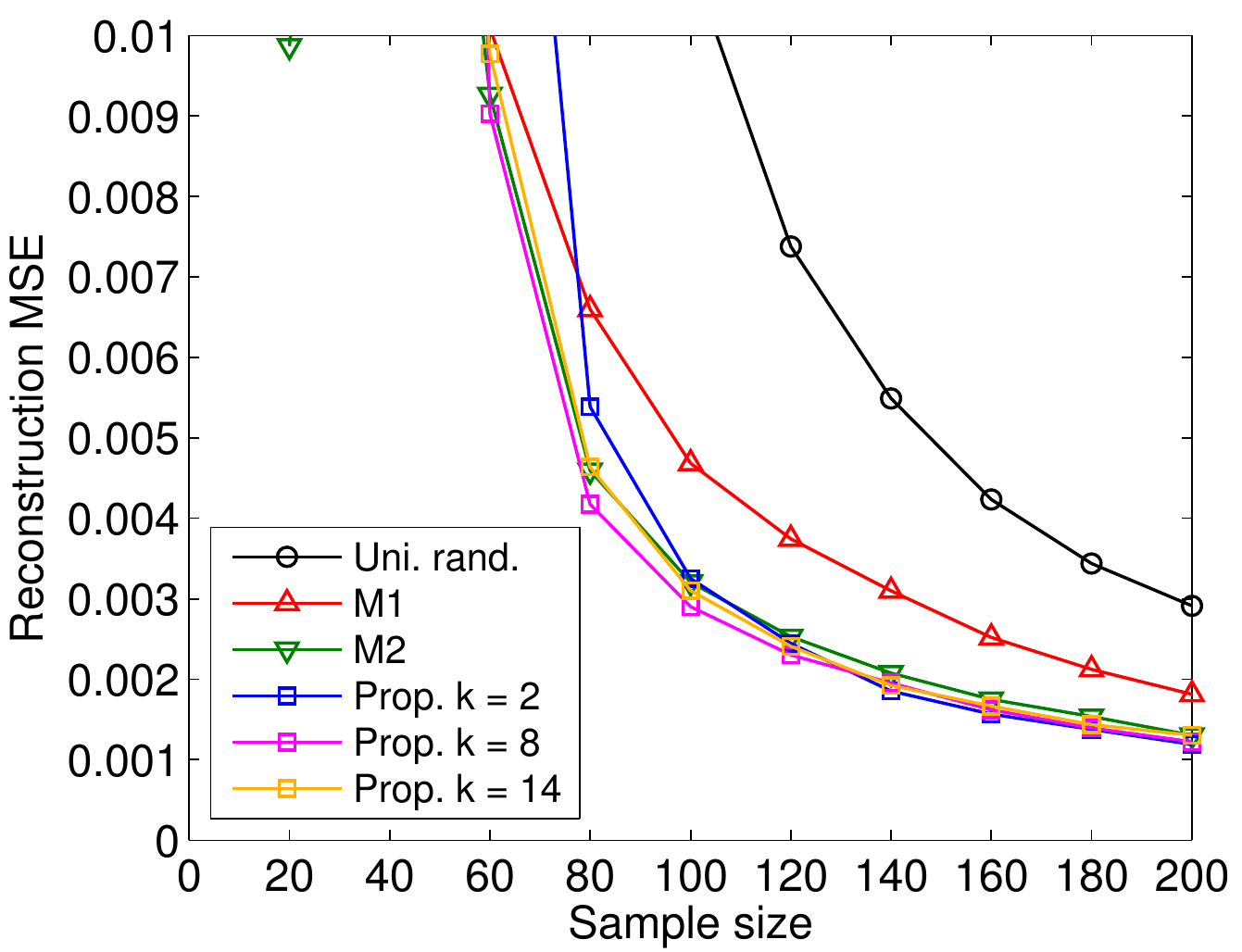}
                \caption{Graph G3 and signal model F2}
                \label{fig:g3_f2}
        \end{subfigure}%
        \\
                \begin{subfigure}[b]{0.32\textwidth}
                \includegraphics[width=\textwidth]{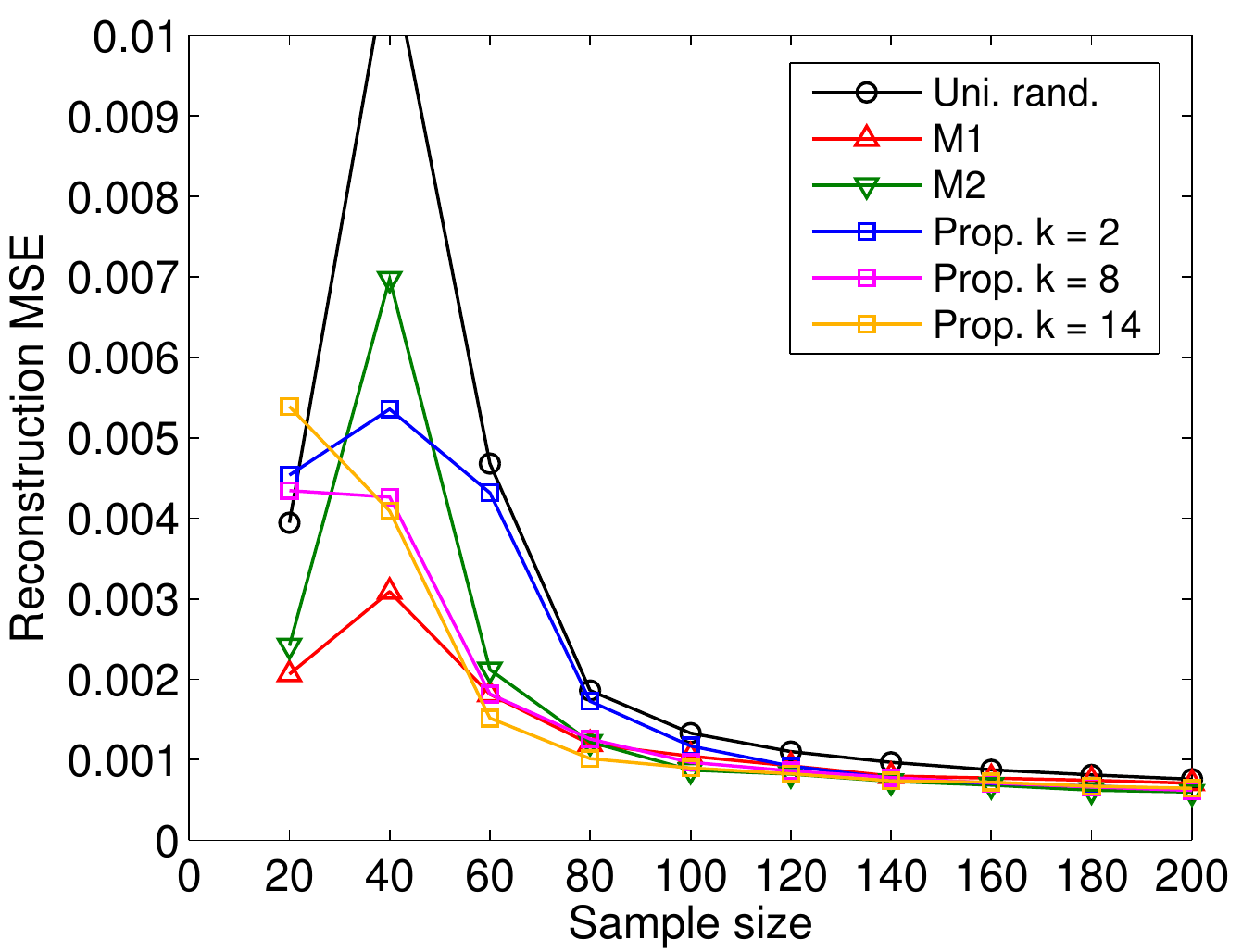}
                \caption{Graph G1 and signal model F3}
                \label{fig:g1_f3}
        \end{subfigure}%
        ~
        \begin{subfigure}[b]{0.32\textwidth}
                \includegraphics[width=\textwidth]{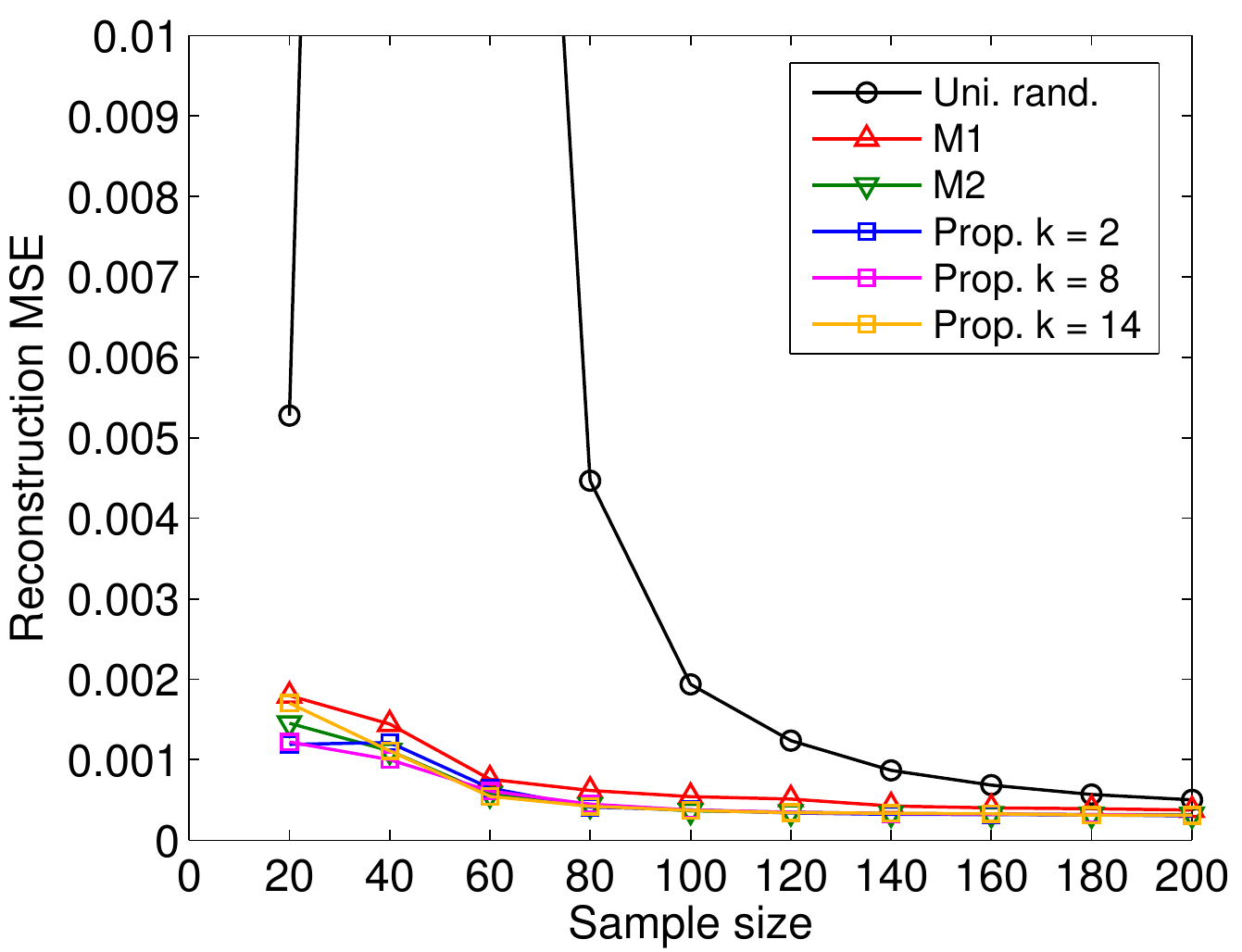}
                \caption{Graph G2 and signal model F3}
                \label{fig:g2_f3}
        \end{subfigure}%
        ~
        \begin{subfigure}[b]{0.32\textwidth}
                \includegraphics[width=\textwidth]{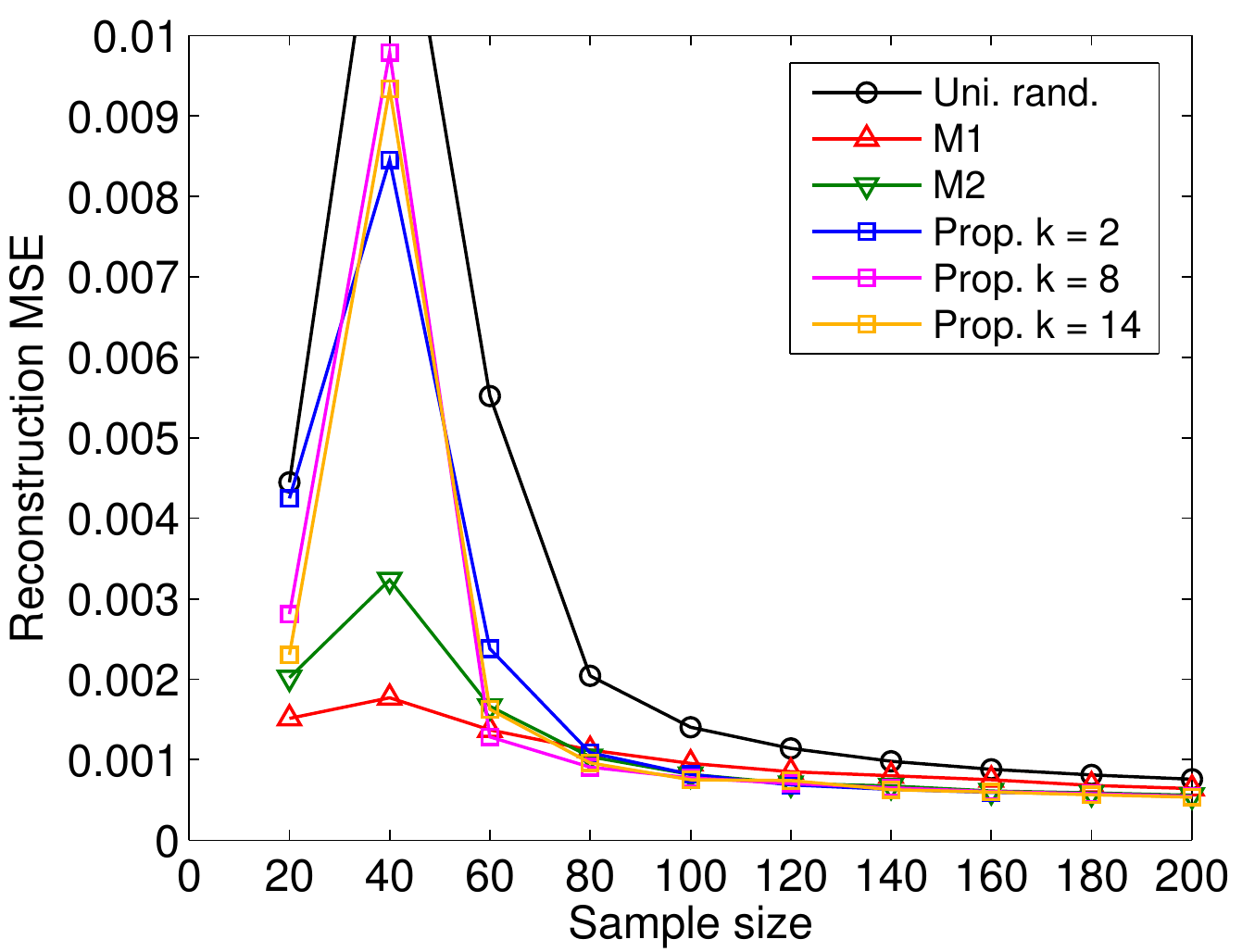}
                \caption{Graph G3 and signal model F3}
                \label{fig:g3_f3}
        \end{subfigure}%
        \caption{Reconstruction results for different graph and signal models. Plots for signal model F1 are not shown since the reconstruction errors are identically zero for all methods when $|\Sc| \geq \dim PW_\omega(G) = 50$. The large reconstruction errors for $|\Sc|< 50$ arise due to non-uniqueness of bandlimited reconstruction and hence, are less meaningful.}
        \label{fig:toy_results}
\end{figure*}


\subsubsection*{Effect of parameter $k$ in the spectral proxy}
Parameter $k$ in the definition of spectral proxies controls how closely we estimate the bandwidth of any signal $\fv$. Spectral proxies with higher values of $k$ give a better approximation of the bandwidth. Our sampling set selection algorithm tries to 
maximize the smallest bandwidth that a signal in $L_2(\Scc)$ can have. Using higher values of $k$ allows us to estimate this smallest bandwidth more closely, thereby leading to better 
sampling sets as demonstrated in Figure~\ref{fig:effect_of_k}. Intuitively, maximizing $\Omega_k(\Sc)$ with $k=1$ ensures that the sampled nodes are well connected to the unsampled nodes~\cite{Gadde-KDD-14} and thus, allows better propagation of the observed signal information. Using $k>1$ takes into account multi-hop paths while ensuring better connectedness between $\Sc$ and $\Scc$. This effect is especially important in sparsely connected graphs and the benefit of increasing $k$ becomes less noticeable when the graphs are dense as seen in Figure~\ref{fig:effect_of_k}. However, this improvement in performance in the case of sparse graphs comes at the cost of increased numerical complexity. 
\begin{figure*}
\centering
\begin{subfigure}[b]{0.32\textwidth}
\includegraphics[width=\textwidth]{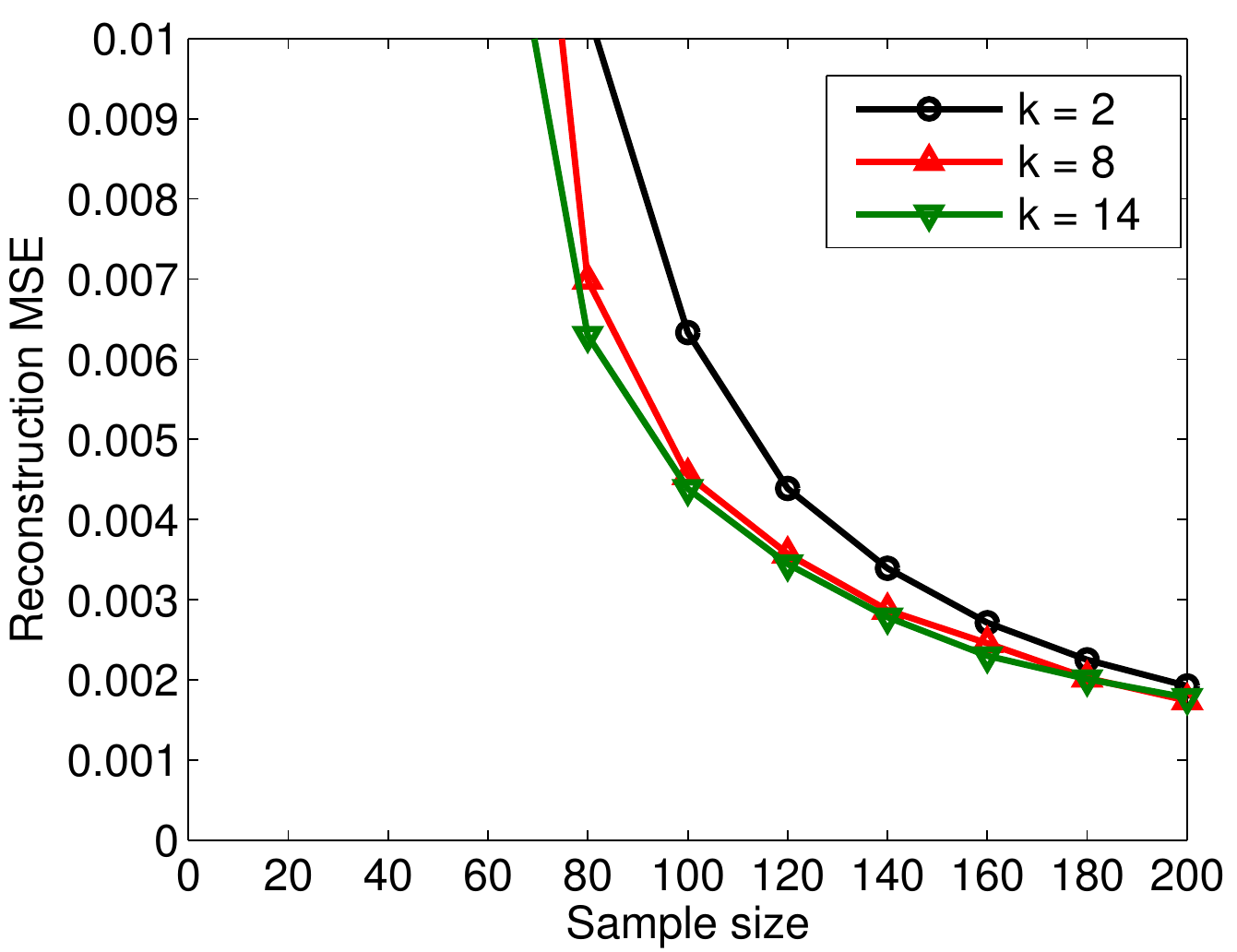}
\caption{$p = 0.01$}
\end{subfigure}
~
\begin{subfigure}[b]{0.32\textwidth}
\includegraphics[width=\textwidth]{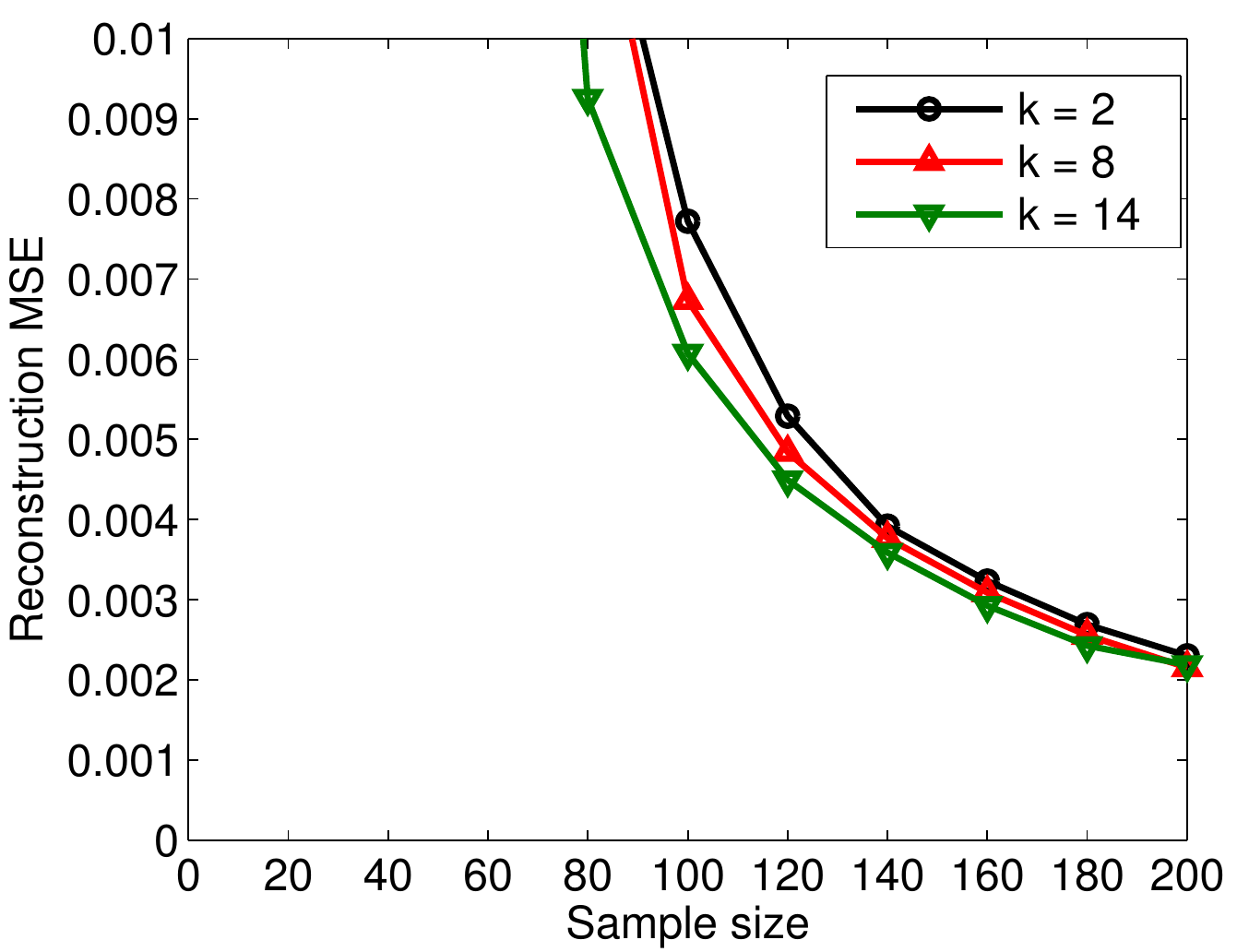}
\caption{$p = 0.05$}
\end{subfigure}
~
\begin{subfigure}[b]{0.32\textwidth}
\includegraphics[width=\textwidth]{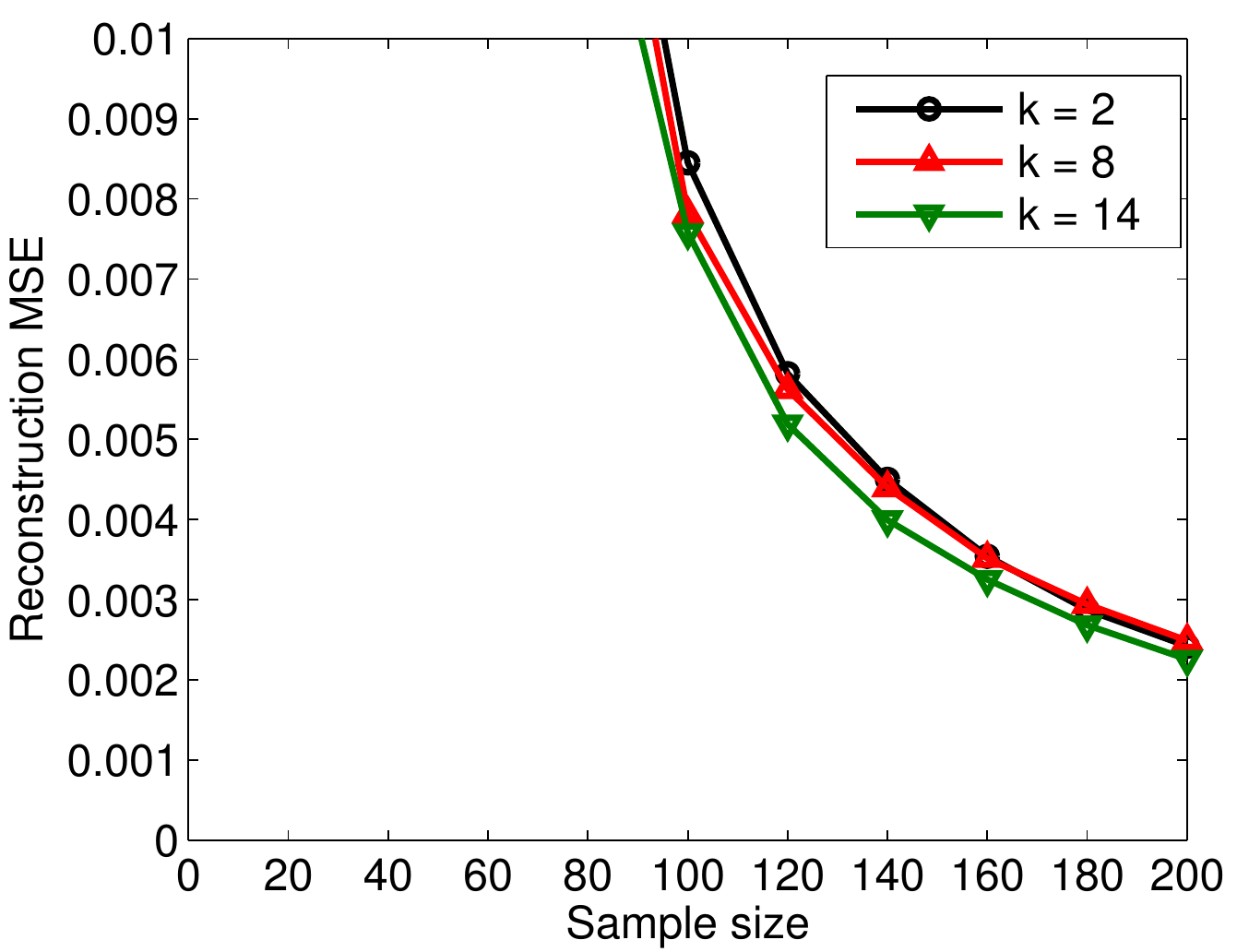}
\caption{$p = 0.1$}
\end{subfigure}
\caption{Reconstruction performance for noisy signals (model F2) with different values of $k$ in Erd\"{o}s-Renyi graphs having different connection sparsity levels. 
Higher connection probability $p$ implies lower sparsity.}
\label{fig:effect_of_k}
\end{figure*}

\subsubsection*{Running time}
We also compare the running times of the sampling set selection methods for different sizes of the graph. For our experiments, we generate symmetrized Erd{\"o}s-Renyi random graphs of different sizes with parameter $0.01$, and measure the average running time of selecting $5\%$ of the samples in MATLAB. For computing the eigen-pairs, we use the code for the Locally Optimal Block Prec-conditioned Conjugate Gradient (LOBPCG) method available online~\cite{Knyazev-SIAM-01} (this was observed to be faster than MATLAB's inbuilt sparse eigensolver \texttt{eigs}, which is based on Lanczos iterations). The results of the experiments are shown in Table~\ref{tab:running_time}. We observe that the rate of increase of running time as the graph size increases is slower for our method compared to other methods, thus making it more practical. Note that the increase with respect to $k$ is nonlinear since the eigengaps are a function of $k$ and lead to different number of iterations required for convergence of the eigenvectors.

\begin{table}[ht]
\centering
\caption{Running time of different methods (in seconds) for selecting $5\%$ samples on graphs of different sizes. The running time for $M1$ increases drastically and is ignored beyond graph size 5k.}
\label{tab:running_time}
\begin{tabular}{l|r|r|r|r}
\hline \hline
                         & 1k & 5k  & 10k & 20k \\
                         \hline
M1                       & $16.76$  & $12,322.72$ & -           & -           \\
\hline
M2                       & $2.16$   & $57.46$     & $425.92$    & $3004.01$   \\
\hline
Proposed, $k = 4$ & $2.00$   & $11.13$     & $84.85$     & $566.39$    \\
\hline
Proposed, $k = 6$ & $13.08$  & $24.46$     & $170.15$    & $1034.21$   \\
\hline
Proposed, $k = 8$ & $31.16$  & $53.42$     & $316.12$    & $1778.31$  \\
\hline
\end{tabular}
\end{table}

\subsection{A Real World Example}
In this example, we apply the proposed method to classification of the USPS handwritten digit dataset~\cite{usps-data}. This dataset consists of $1100$ images of size $16 \times 16$ each corresponding digits 0 to 9. 
We create $10$ different subsets of this dataset randomly, consisting of $100$ images from each class.
The data points can be thought of as points $\{\xv_i\}_{i = 1}^{1000} \subset \mathbb{R}^{256}$ with labels $\{\yv_i\}_{i = 1}^{1000}$. 
For each instance, we construct a symmetrized $k$-nearest neighbor ($k$-nn) graph with $k = 10$, where each node corresponds to a data point.
We restrict the problem to the largest strongly connected component of the graph for convenience so that a stationary distribution for the resultant random walk exists which allows us to define the random walk based GFT. 
The graph signal is given by the membership function $\fv^c$ of each class $c$ which takes a value $1$ on a node which belongs to the class and is $0$ otherwise. To solve the multi-class classification task, we use the one-vs-rest strategy which entails reconstructing the membership function of every class. The final classification for node $i$ is then obtained by
\begin{equation}
\yv_i = \argmax_c \; \{\fv^c_i\}.
\end{equation}

\begin{figure*}
        \centering
        \begin{subfigure}[b]{0.32\textwidth}
                \includegraphics[width=\textwidth]{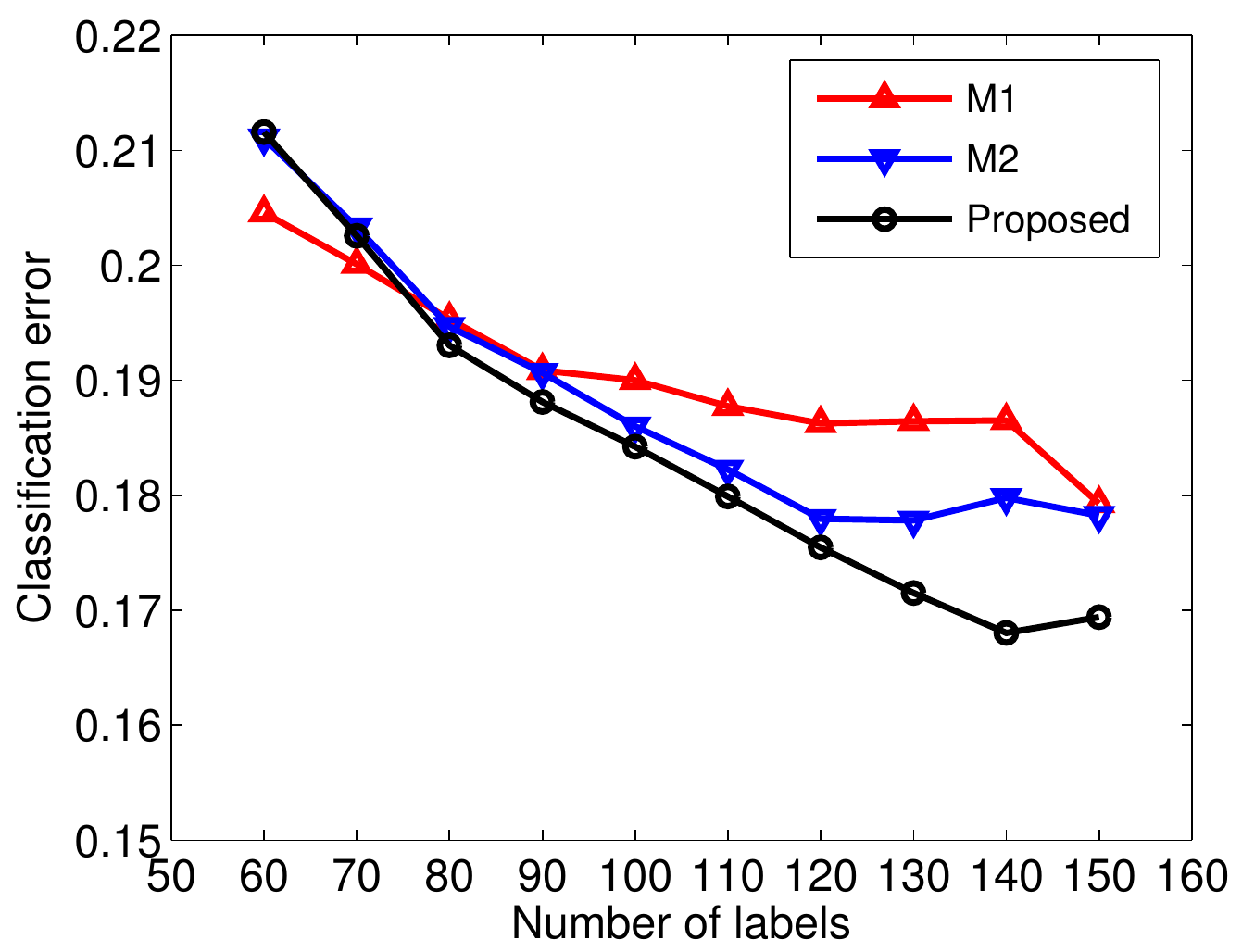}
                \caption{Comparison of different methods using the adjacency based variation operator.}
                \label{fig:usps_methods}
        \end{subfigure}%
        ~\;
        \begin{subfigure}[b]{0.32\textwidth}
                \includegraphics[width=\textwidth]{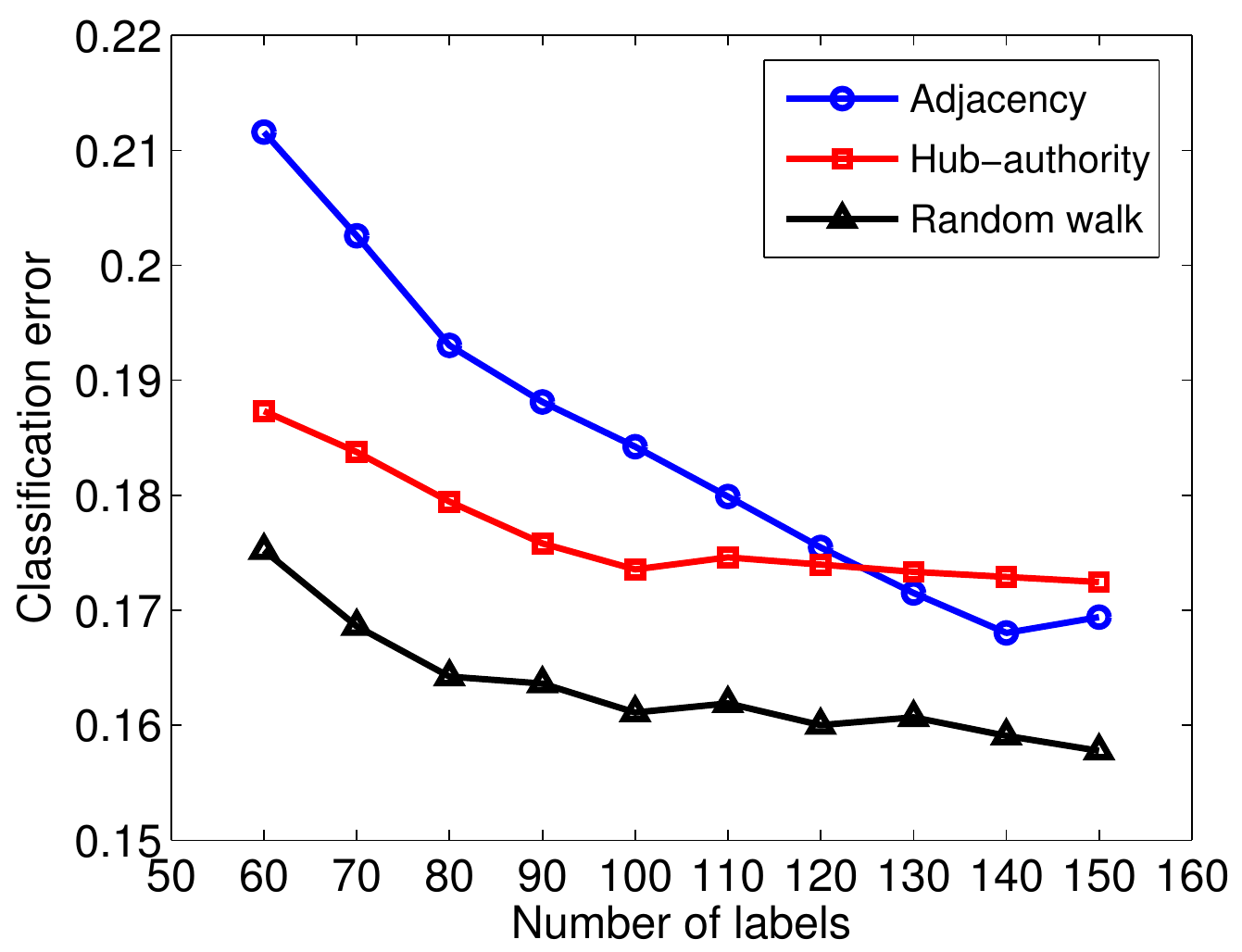}
                \caption{Performance of proposed method with different choices of the variation operator.}
                \label{fig:usps_bases}
        \end{subfigure}%
        ~\;
        \begin{subfigure}[b]{0.32\textwidth}
                \includegraphics[width=\textwidth]{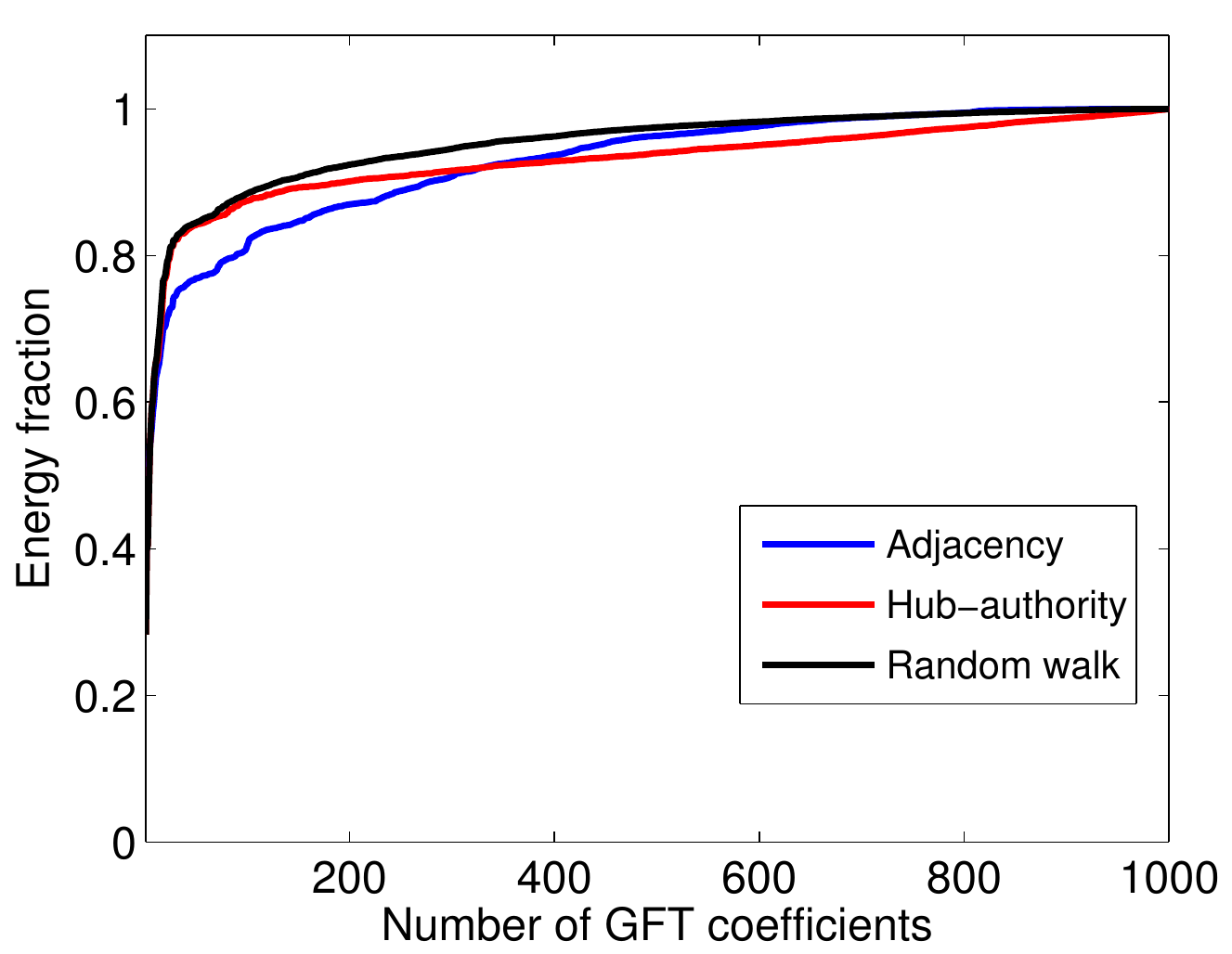}
                \caption{Energy compaction with different GFTs obtained from different variation operators.}
                \label{fig:usps_gft}
        \end{subfigure}%
        \caption{Classification results for the USPS dataset using different methods and GFTs}
        \label{fig:usps}
\end{figure*}

We first compare the performance of the proposed method against M1 and M2 using the normalized adjacency matrix based GFT with the variation operator $\Lm = \mathbf{I} - \Dm^{-1}\Wm$. The bandwidth parameter $r$ is set to $50$.
The plot of classification error averaged over the 10 dataset instances vs. number of labels is presented in Figure~\ref{fig:usps_methods}.
It shows that the proposed method has comparable performance despite being localized. The performance is also affected by the choice of the variation operators (or, the GFT bases). Figure~\ref{fig:usps_bases} shows that the variation operators based on the hub-authority model and random walk offer higher classification accuracy and thus, are more suited for this particular application. Their superior performance can be explained by looking at the signal representation in the respective GFT domains. Figure~\ref{fig:usps_gft} shows the fraction of signal energy captured in increasing number of GFT coefficients starting from low frequency. Since the hub-authority model based GFT and random walk based GFT offer more energy compaction than adjacency based GFT, the signal reconstruction quality using these bases is naturally better.

\section{Conclusion}
\label{sec:conclusion}
We studied the problem of selecting an optimal sampling set for reconstruction of bandlimited graph signals. The starting point of our framework is the notion of the Graph Fourier Transform (GFT) which is defined via an appropriate variation operator. Our goal is to find good sampling sets for reconstructing signals which are bandlimited in the above frequency domain. We showed that when the samples are noisy or the true signal is only approximately bandlimited, the reconstruction error depends not only on the model mismatch but also on the choice of sampling set. We proposed a measure of quality for the sampling sets, namely the cutoff frequency, that can be computed without finding the GFT basis explicitly. A sampling set that maximizes the cutoff frequency is shown to minimize the reconstruction error. We also proposed a greedy algorithm which finds an approximately optimal set. The proposed algorithm can be efficiently implemented in a distributed and parallel fashion. Together with localized signal reconstruction methods, it gives an effective method for sampling and reconstruction of smooth graph signals on large graphs. 

The present work opens up some new questions for future research. The problem of finding a sampling set with maximum cutoff frequency is combinatorial. The proposed greedy algorithm gives only an approximate solution to this problem. It would be useful to find a polynomial time algorithm with theoretical guarantees on the quality of approximation. Further, the proposed set selection method is not adaptive, i.e., the choice of sampling locations does not depend on previously observed samples. This can be a limitation in applications that require  batch sampling. In such cases, it would be desirable to have an adaptive sampling set selection scheme which takes into account the previously observed samples to refine the choice of nodes to be sampled in the future.

\appendix[Properties of spectral proxies]
\label{app:spectral_moments}

In this section, we prove the monotonicity and convergence properties of $\omega_k(\fv)$.

\begin{proposition}
If $\Lm$ has real eigenvalues and eigenvectors, then for any $k_1 < k_2$, we have $\omega_{k_1}(\fv) \leq \omega_{k_2}(\fv), \forall \fv$.
\end{proposition}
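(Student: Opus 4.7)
The plan is to recognize $\omega_k(\fv)$ as a weighted power mean of the absolute eigenvalues of $\Lm$, and then invoke the classical (weighted) power mean inequality. Concretely, I would proceed as follows.

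First, I would use the hypothesis that $\Lm$ has real eigenvalues $\{\lambda_i\}$ and a (real) eigenbasis $\{\uv^i\}$ that, in the cases of interest in the paper (e.g.\ symmetric variation operators in Table~\ref{tab:gft_bases}), can be taken to be orthonormal. Expanding $\fv = \sum_i \tilde{\fv}_i \uv^i$, applying $\Lm^k$ term-by-term gives $\Lm^k \fv = \sum_i \lambda_i^k \tilde{\fv}_i \uv^i$, and Parseval's identity yields
\begin{equation}
\|\Lm^k \fv\|^2 \;=\; \sum_i \lambda_i^{2k}\, \tilde{\fv}_i^{\,2}, \qquad \|\fv\|^2 \;=\; \sum_i \tilde{\fv}_i^{\,2}.
\end{equation}
Setting $p_i \defeq \tilde{\fv}_i^{\,2}/\|\fv\|^2$, so that $p_i \geq 0$ and $\sum_i p_i = 1$, this rewrites the spectral proxy as
\begin{equation}
\omega_k(\fv) \;=\; \Bigl(\textstyle\sum_i p_i\,|\lambda_i|^{2k}\Bigr)^{1/(2k)},
\end{equation}
which is exactly the weighted $2k$-th power mean of the nonnegative numbers $|\lambda_i|$ with weights $p_i$.

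Second, I would invoke the weighted power mean inequality: for any probability weights $\{p_i\}$, nonnegative values $\{a_i\}$, and real exponents $0 < r < s$,
\begin{equation}
\Bigl(\textstyle\sum_i p_i\, a_i^{r}\Bigr)^{1/r} \;\leq\; \Bigl(\textstyle\sum_i p_i\, a_i^{s}\Bigr)^{1/s}.
\end{equation}
Applying this with $a_i = |\lambda_i|$, $r = 2k_1$, and $s = 2k_2$ for $k_1 < k_2$ gives $\omega_{k_1}(\fv) \leq \omega_{k_2}(\fv)$, as required.

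The main obstacle is justifying the Parseval step when $\Lm$ is not symmetric (e.g.\ the random-walk or adjacency-based operators of Table~\ref{tab:gft_bases}), since in that case the eigenbasis $\{\uv^i\}$ is generally not orthonormal. In such cases one can either restrict attention to a symmetric $\Lm$ (which is the setting of the monotonicity claim, complex-eigenvalue cases being handled by the separate convergence statement \eqref{eq:inf_order}), or change coordinates via $\Lm = \Um \Lambdam \Um^{-1}$ and carry the unequal norms $\|\uv^i\|$ into the weights, replacing $p_i$ by a properly normalized weighting. Either way, once the quantity $\omega_k(\fv)^{2k}$ is cast as a weighted $2k$-th moment of the (moduli of the) eigenvalues with respect to a probability measure, the power mean inequality closes the argument. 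Equality in the power mean inequality characterizes when monotonicity is strict, namely when $\fv$ is not a single eigencomponent.
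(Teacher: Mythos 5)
Your core argument is the same as the paper's. The paper expands $\omega_{k_1}(\fv)^{2k_1}$ in the eigenbasis and applies Jensen's inequality to the convex map $x \mapsto x^{k_2/k_1}$ with weights summing to one, which is exactly the weighted power-mean inequality you invoke; for a symmetric (orthonormally diagonalizable) $\Lm$, where Parseval gives nonnegative single-index weights $p_i = \tilde{\fv}_i^2/\|\fv\|^2$, the two proofs coincide.

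The only substantive difference is the non-orthogonal case, and there your sketched fix does not work. The paper keeps a double-indexed expansion $\omega_{k_1}(\fv)^{2k_1} = \sum_{i,j} (\lambda_i\lambda_j)^{k_1} c_{ij}$ with $c_{ij} = \tilde{\fv}_i \tilde{\fv}_j\, \uv_i^\top \uv_j / \|\fv\|^2$ and applies Jensen to those weights; your proposal to ``carry the unequal norms $\|\uv^i\|$ into the weights'' cannot produce a single-index probability weighting, because the obstruction is not the eigenvector norms but the surviving cross terms $\uv_i^\top\uv_j$, which enter with the $k$-dependent factors $(\lambda_i\lambda_j)^k$ and can be negative. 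In fact no repair is possible at this level of generality: for an oblique projection such as $\Lm = \bigl(\begin{smallmatrix} 0 & a \\ 0 & 1 \end{smallmatrix}\bigr)$ one has $\Lm^k = \Lm$, so with $\fv = (0,1)^\top$, $\omega_k(\fv) = (1+a^2)^{1/(2k)}$ is strictly decreasing in $k$, even though all eigenvalues and eigenvectors are real. (The paper's own Jensen step tacitly needs the $c_{ij}$ to be nonnegative, which essentially means orthogonal eigenvectors, so your instinct to restrict to that setting is sound.) In short: your proof is correct and matches the paper's for orthogonal eigenbases; the proposed extension to oblique eigenbases should be dropped rather than patched, since the statement itself requires orthogonality (or normality of $\Lm$) to hold.
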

\begin{proof}
We first expand $\omega_{k_1}(\fv)$ as follows:
\begin{align}
\left( \omega_{k_1}(\fv) \right)^{2k_1} &= \left( \frac{\| \Lm^{k_1} \fv\|}{\|\fv\|} \right)^{2} \nonumber \\
&= \frac{\sum_{i,j} (\lambda_i\lambda_j)^{k_1} \tilde{\fv}_i \tilde{\fv}_j \uv_i^\top \uv_j}{\sum_{i,j} \tilde{\fv}_i \tilde{\fv}_j \uv_i^\top \uv_j} \\
&= \sum_{i,j} (\lambda_i \lambda_j)^{k_1} c_{ij}
\end{align}
where $c_{ij} = \tilde{\fv}_i \tilde{\fv}_j \uv_i^\top \uv_j/\sum_{i,j} \tilde{\fv}_i \tilde{\fv}_j \uv_i^\top \uv_j$. Now, consider the function $f(x) = x^{k_2/k_1}$. Note that since $k_1 < k_2$, $f(x)$ is a convex function. Further, since $\sum_{i,j}c_{ij} = 1$, we can use Jensen's inequality in the above equation to get
\begin{align}
\left( \sum_{i,j} (\lambda_i \lambda_j)^{k_1} c_{ij} \right)^{k_2/k_1} &\leq \sum_{i,j} \left( (\lambda_i \lambda_j)^{k_1} \right)^{k_2/k_1} c_{ij} \\
\Rightarrow \left( \sum_{i,j} (\lambda_i \lambda_j)^{k_1} c_{ij} \right)^{1/2k_1} &\leq \left( \sum_{i,j} (\lambda_i \lambda_j)^{k_2} c_{ij} \right)^{1/2k_2} \nonumber \\
\Rightarrow \omega_{k_1}(\fv) &\leq \omega_{k_2}(\fv)
\end{align}
If $\Lm$ has real entries, but complex eigenvalues and eigenvectors, then these occur in conjugate pairs, hence, the above summation is real. However, in that case, $\omega_k(\fv)$ is not guaranteed to increase in a monotonous fashion, since $c_{ij}$'s are not real and Jensen's inequality breaks down.
\end{proof}

\begin{proposition}
Let $\omega(\fv)$ be the bandwidth of any signal $\fv$. Then, the following holds:
\begin{equation}
\omega(\fv) = \lim_{k\rightarrow \infty} \omega_k(\fv) = \lim_{k\rightarrow \infty} \left( \frac{\| \Lm^k \fv\|}{\|\fv\|} \right)^{1/k}
\end{equation}
\end{proposition}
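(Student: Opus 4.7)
The plan is to compute $\|\Lm^k \fv\|^{1/k}$ explicitly by expanding $\fv$ in the (generalized) eigenbasis of $\Lm$ and showing that, asymptotically, the largest-magnitude eigenvalue supporting $\fv$ dominates. I would start by writing $\Lm = \Um \Lambdam \Um^{-1}$ (or use a Jordan form if $\Lm$ is not diagonalizable), so that $\Lm^k \fv = \Um \Lambdam^k \tilde{\fv}$ with $\tilde{\fv} = \Um^{-1}\fv$. Let $\Ic = \{i : \tilde{\fv}_i \neq 0\}$ and let $i^\ast \in \Ic$ achieve $|\lambda_{i^\ast}| = \max_{i \in \Ic} |\lambda_i| = \omega(\fv)$.

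Next I would sandwich $\|\Lm^k \fv\|$ between two quantities that differ from $|\lambda_{i^\ast}|^k$ only by $k$-independent constants. Since $\Um$ is a fixed invertible matrix, there exist constants $0 < c \le C < \infty$ (depending only on $\Um$) with $c\, \|\Lambdam^k \tilde{\fv}\| \le \|\Lm^k \fv\| \le C\, \|\Lambdam^k \tilde{\fv}\|$. For the diagonal term,
\begin{equation}
\|\Lambdam^k \tilde{\fv}\|^2 \;=\; \sum_{i \in \Ic} |\lambda_i|^{2k} |\tilde{\fv}_i|^2,
\end{equation}
and I would bound this between $|\lambda_{i^\ast}|^{2k} |\tilde{\fv}_{i^\ast}|^2$ and $|\lambda_{i^\ast}|^{2k} \|\tilde{\fv}\|^2 \cdot |\Ic|$. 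Both bounds have the form $|\lambda_{i^\ast}|^{2k}\cdot D$ with $D > 0$ independent of $k$.

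Taking $(2k)$-th roots and then also absorbing the $c,C$ and $\|\fv\|^{-1/k}$ factors, I would get
\begin{equation}
|\lambda_{i^\ast}| \cdot a_k \;\le\; \omega_k(\fv) \;\le\; |\lambda_{i^\ast}| \cdot b_k,
\end{equation}
where $a_k, b_k$ are positive sequences of the form $(\text{const})^{1/k}$, each tending to $1$ as $k \to \infty$. Sending $k \to \infty$ in the sandwich gives $\lim_{k\to\infty}\omega_k(\fv) = |\lambda_{i^\ast}| = \omega(\fv)$, which is the claim.

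The main obstacle I anticipate is the non-orthogonal and possibly non-diagonalizable case: when $\Um$ is not unitary the cross terms $\uv_i^\top \uv_j$ do not vanish and $\|\Lm^k \fv\|^2$ is not simply the weighted sum of $|\tilde{\fv}_i|^2$, and in the Jordan case $\Lm^k$ acquires polynomial-in-$k$ factors from the nilpotent parts. The orthogonal-$\Um$ argument extends cleanly by the constant sandwich above because those factors contribute at most polynomial-in-$k$ prefactors, which become $1$ after taking $k$-th roots; the only properties really used are that $\Um$ is a fixed invertible change of basis and that $|\lambda_{i^\ast}|$ strictly dominates the smaller-magnitude eigenvalues in $\Ic$ (with ties handled trivially since they only enlarge the prefactor $D$). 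The monotonicity already established in the previous proposition also guarantees that the limit exists and equals the supremum, so it suffices to produce matching upper and lower bounds in the limit, which is exactly what the sandwich above provides.
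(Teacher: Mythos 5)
Your proof is correct, and it reaches the result by a genuinely different route than the paper. The paper works directly with the (possibly non-orthogonal) eigenbasis expansion of the Rayleigh-type quotient, writing $\|\Lm^k\fv\|^2/\|\fv\|^2=\sum_{i,j}(\lambda_i\lambda_j)^k c_{ij}$ with $c_{ij}=\tilde{\fv}_i\tilde{\fv}_j\uv_i^\top\uv_j/\|\fv\|^2$, factoring out $\lambda_p^{2k}$ for the largest supported eigenvalue, and arguing that the remaining bracket tends to $1$ so its $(2k)$-th root does too (with a brief remark that conjugate pairs are handled the same way after factoring out $|\lambda_p|$). You instead decouple the change of basis from the spectrum: the two-sided bound $\sigma_{\min}(\Um)\,\|\Lambdam^k\tilde{\fv}\|\le\|\Lm^k\fv\|\le\sigma_{\max}(\Um)\,\|\Lambdam^k\tilde{\fv}\|$ reduces everything to the diagonal sum $\sum_i|\lambda_i|^{2k}|\tilde{\fv}_i|^2$, which you squeeze between $|\lambda_{i^*}|^{2k}$ times $k$-independent positive constants, and the $k$-th roots of those constants tend to $1$. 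What your route buys: it never touches the sign-indefinite cross terms $\uv_i^\top\uv_j$, so it is automatically robust when several supported eigenvalues share the maximal modulus (e.g.\ $\pm\lambda_p$), a case in which the paper's bracket need not converge to $1$ at all --- it only stays bounded between positive constants, which is precisely what your sandwich establishes; it also treats complex conjugate pairs with no special case, and it extends to the non-diagonalizable setting since Jordan blocks only introduce polynomial-in-$k$ factors that vanish under $k$-th roots (the paper's proof assumes diagonalizability). The paper's route, in turn, reuses the same $c_{ij}$ expansion as the monotonicity proposition, keeping the appendix to a single computation. If you write up the Jordan case in full, make the lower bound explicit: pick, within a block attaining $|\lambda_{i^*}|$, the highest-index nonzero generalized-eigenvector coefficient; its image under the $k$-th power of the block contains the uncontaminated term $\lambda_{i^*}^k$ times that coefficient, which gives the needed $k$-independent lower constant.
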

\begin{proof}
We first consider the case when $\Lm$ has real eigenvalues and eigenvectors. Let $\omega(\fv) = \lambda_p$, then we have:
\begin{align}
\omega_k(\fv) &= \left( \frac{\sum_{i,j=1}^p (\lambda_i\lambda_j)^k \tilde{\fv}_i \tilde{\fv}_j \uv_i^\top \uv_j}{\sum_{i,j=1}^p \tilde{\fv}_i \tilde{\fv}_j \uv_i^\top \uv_j} \right)^{1/2k} \\
&= \lambda_p \left( c_{pp} + \sum_{(i,j)\neq(p,p) } \left(\frac{\lambda_i}{\lambda_p}\frac{\lambda_j}{\lambda_p}\right)^k c_{ij} \right)^{1/2k}
\end{align}
where $c_{ij} = \tilde{\fv}_i \tilde{\fv}_j \uv_i^\top \uv_j/\sum_{i,j} \tilde{\fv}_i \tilde{\fv}_j \uv_i^\top \uv_j$. Taking limits, it is easy to observe that the term in parentheses evaluates to 1. Hence, we have
\begin{equation}
\lim_{k\rightarrow \infty} \omega_k(\fv) = \lambda_p = \omega(\fv)
\end{equation}
Now, if $\Lm$ has complex eigenvalues and eigenvectors, then these have to occur in conjugate pairs since $\Lm$ has real entries. Hence, for this case, we do a similar expansion as above and take $|\lambda_p|$ out of the expression. Then, the limit of the remaining term is once again equal to 1.
\end{proof}


\ifCLASSOPTIONcaptionsoff
  \newpage
\fi



\bibliographystyle{IEEEtran}
\bibliography{refs}

%









\end{document}